\newtheorem{thm}{Theorem}[section]
\newtheorem{lem}[thm]{Lemma}
\newtheorem{prop}[thm]{Proposition}
\theoremstyle{remark}
\newtheorem{rem}[thm]{Remark}
\numberwithin{equation}{section}
\newcommand{\set}[1]{\left\{#1\right\}}
\newcommand{\Real}{\mathbb R}
\newcommand{\Natural}{\mathbb N}
\newcommand{\prob}{\mathbb{P}}
\newcommand{\ud}{\mathrm d}
\newcommand{\qll}{\textsf{qll}\,}
\newcommand{\pqll}{\textsf{pqll}\,}
\title{Maximum Penalized Quasi-Likelihood Estimation of the Diffusion Function}
\author{Jeff Hamrick}
\address[Jeff Hamrick]{Department of Mathematics and Computer Science, Rhodes College, 2000 N. Parkway, Memphis, TN 38112, USA}
\email{hamrickj@rhodes.edu}
\author{Yifei Huang}
\address[Yifei Huang]{Department of Mathematics and Statistics, Boston University, 111 Cummington Street, Boston, MA 02215, USA}
\email{yifei@bu.edu}
\author{Constantinos Kardaras}
\address[Constantinos Kardaras]{Department of Mathematics and Statistics, Boston University, 111 Cummington Street, Boston, MA 02215, USA}
\email{kardaras@bu.edu}
\author{Murad S. Taqqu}
\address[Murad S. Taqqu]{Department of Mathematics and Statistics, Boston University, 111 Cummington Street, Boston, MA 02215, USA}
\email{murad@bu.edu}
\thanks{The third and fourth author gratefully acknowledge partial support by the National Science Foundation, grant numbers DMS-0908461 and DMS-1007616, respectively.}%
\date{\today}%
\begin{document}

\begin{abstract}
We develop a \emph{maximum penalized quasi-likelihood estimator} for estimating in a nonparametric way the diffusion function of a diffusion process, as an alternative to more traditional kernel-based estimators. After developing a numerical scheme for computing the maximizer of the penalized maximum quasi-likelihood function, we study the asymptotic properties of our estimator by way of simulation. Under the assumption that overnight London Interbank Offered Rates (LIBOR); the USD/EUR, USD/GBP, JPY/USD, and EUR/USD nominal exchange rates; and 1-month, 3-month, and 30-year Treasury bond yields are generated by diffusion processes, we use our numerical scheme to estimate the diffusion function.
\end{abstract}

\setcounter{section}{-1}

\maketitle

\section{Introduction}
\label{s:Introduction}

One of the key achievements in the field of financial engineering is the representation of the price of contingent claims as the expectation of their discounted future payoffs under the so-called \emph{risk-neutral probability}, i.e., a probability measure under which discounted (by the exponential of integrated short rate) traded asset prices are martingales. In complete markets, this representation permits the computation of the unique arbitrage-free price, as well as the hedging strategy needed to remove all the risk associated with issuing (or writing) the contingent claim.

From a practical viewpoint, there is a major issue with choosing the form of the risk-neutral probability measure, as it depends on the model specification. In other words, the stochastic movement of asset prices must be modeled in a way that is consistent with observed data collected from the market. As equivalent changes of probability leave the quadratic variation of a process intact, market data provide a possible way to pin down the volatility component; then, the drift rate under the risk-neutral measure is simply set equal to the applicable short rate.

The most elementary continuous-time model for asset prices is probably geometric Brownian motion, for which the log-price dynamics are characterized by both constant drift rate and constant volatility. This particular choice of parameters is not consistent with various asset prices like stock prices, exchange rates, or interest rates. Therefore, more elaborate models have to be utilized. One direct generalization of geometric Brownian motion are local volatility models, where the diffusion coefficient is a function of the underlying asset's level.

In this paper, we consider the class of diffusion models with coefficient functions that are allowed to depend on the asset level. We aim to develop a nonparametric estimation procedure for the diffusion function based on a maximum penalized quasi-likelihood method, following the work of \cite{eggermont:lariccia:2001}, \cite{eggermont:lariccia:2009}.

In Section \ref{s:MarketModel}, we consider a one-dimensional diffusion model for the movement in the price of a financial asset, and briefly review the history of attempts to estimate the diffusion function $\sigma$. In Section \ref{s:VolatilityEstimationForDiffusionProcess}, we develop a quasi-likelihood function for a diffusion process and then add a penalization term (also known as a \emph{regularization} term) to obtain a \emph{penalized quasi-likelihood function}. After establishing the existence of a maximizer $\theta_{*}$ of the penalized quasi-likelihood function, we use techniques from the calculus of variations to justify a property of $\theta_{*}$ which, in turn, permits us to introduce a numerical scheme for calculating $\theta_{*}$ along discrete, non-uniform design points. In Section \ref{s:SimulationStudy}, we study two simulated diffusion processes and note that the mean integrated squared error of our estimator converges to zero at rates that seem to be comparable to the rates of convergence achieved by kernel-based estimators. Then, in Section \ref{s:ApplicationToRealData}, we use software 
to estimate the diffusion function for overnight London Interbank Offered Rates (LIBOR), 1-month and 30-year Treasury bond yields, and the USD/EUR, USD/GBP, and JPY/USD exchange rates.

\section{Discussion of the Problem}
\label{s:MarketModel}

\subsection{The market model}
We consider a diffusion model for the price movement of a financial asset. In particular, we study a one-dimensional diffusion $(Y_{t})_{t \in [0, T]}$ with $Y_{0}=y$ and dynamics given by
\begin{equation}
\label{eq:Diffusion}
\ud Y_{t}=b(Y_{t}) \ud t+\sigma(Y_{t}) \ud W_{t}, \quad t \in[0,T].
\end{equation}
Here, $W$ is a standard Brownian motion, $b: \mathbb{R}\mapsto \mathbb{R}$ and $\sigma: \mathbb{R} \mapsto \mathbb{R}_{++} \equiv (0, \infty)$ are Borel-measurable functions, and $T>0$ is a fixed time horizon. With $C([0,T]; \Real)$ denoting the canonical path-space of continuous functions equipped with the Borel sigma field, the dynamics described in (\ref{eq:Diffusion}) are valid under the probability $\mathbb{P}_{y}^{(b,\sigma)}$ on $C([0,T]; \Real)$, which is such that $\mathbb{P}_{y}^{(b,\sigma)} [Y_{0}=y]=1$.

In order for the problem to be well-posed, we assume that the stochastic differential equation (\ref{eq:Diffusion}) has a weak solution that is unique in the sense of probability law. For conditions that guarantee the existence of a weak solution to (\ref{eq:Diffusion}), see \cite[Theorem 5.4, pg. 332]{karatzas:shreve:1991}. Of course, uniqueness of a weak solution is a necessary condition for the well-posedness of statistical estimation problems involving diffusions.

\begin{rem}
The assumption that the functions $b$ and $\sigma$ have domain $\Real$ is made because in the analysis we shall use a conditional Gaussian approximation for the transition densities of the diffusion $Y$. In practice, the diffusion can live on any sub-interval of $\Real$, such as $(0, \infty)$; this situation will create only theoretical obstacles. Ultimately, it is usually the case that a scale transformation of $Y$ will result in a related diffusion with full support on $\Real$; for example, a log-transformation of a diffusion supported on $(0, \infty)$ results in a new diffusion supported on $\Real$.
\end{rem}

\subsection{Nonparametric estimation of the diffusion function}

We consider the problem of estimating in a nonparametric way the coefficient $\sigma$ that appears in equation (\ref{eq:Diffusion}) under the assumption that $\mathbb{P}_{y}^{(b,\sigma)}$ is the historical (or statistical, or real-world) probability law. The estimation of $\sigma$ is of crucial importance, since it stays fixed under the equivalent changes of probability measure that are necessary for pricing and hedging contingent claims. More specifically, recall that when we pass from the historical probability law to the risk-neutral probability law, $\sigma$ remains unaltered while $b$ changes.

In the case in which we \emph{continuously} observe data over the time interval $[0,T]$, i.e., when the whole path $(Y_{t})_{t\in[0,T]}$ is observed, perfect estimation of $\sigma$ is possible, at least in the window of observations $[m_{T},M_{T}]$, where we define
\begin{equation}
\label{eq:MinimumAndMaximum}
m_{T}:=\min_{t\in[0,T]}Y_{t} \quad \textrm{and} \quad M_T:=\max_{t\in[0,T]}Y_{t}.
\end{equation}
Indeed, the quadratic variation  of $Y$ under $\prob^{(b,\sigma)}$ is $\langle Y,Y \rangle = \int_{0}^{\cdot}\sigma^{2}(Y_{t}) \ud t$ and hence $\sigma^{2}(Y_{t})=\partial \langle Y,Y \rangle_{t}/\partial t$ for $t \in [0, T]$. In this case of continuous observations, given perfect estimation of $\sigma$, \cite{kutoyants:2004} contains substantial results regarding estimation of the drift $b$ in both parametric and nonparametric settings.

Of course, the case of continuous observations is only a theoretical idealization. In practice, we are usually presented with discrete observations $Y_{t_{0}},Y_{t_{1}}, \ldots, Y_{t_{n}}$, where $0=t_{0}< \ldots<t_{n}=T$. Typically, researchers study two distinct problems for the case in which we only have discrete observations at our disposal. In the case of \emph{low sampling frequency}, the sampling interval remains fixed, and the behavior of estimators is studied as $T$ tends to infinity. (In \cite{bandi:phillips:2003}, this is referred to as the \emph{long-time asymptotic} approach --- see, for example, \cite{gobet:hoffmann:reib:2004} for the nonparametric approach in this case.) We shall consider the problem of estimating the diffusion function $\sigma$ when we have \emph{high sampling frequency}, that is, the mesh of the partition, as defined by $\max_{i=1, \ldots,n}|t_{i}-t_{i-1}|$, tends to zero but the time horizon $T$ remains fixed. See \cite{bandi:phillips:2003} for more on nonparametric estimation of the drift and the diffusion function in the case of high-frequency data, which is referred to as the \emph{infill asymptotic} approach. The infill asymptotic approach has recently been combined with observation of the integrated diffusion process to estimate the drift and diffusion functions --- see \cite{comte:genoncatalot:rozenholc:2007}.

The earliest research on the problem of nonparametric estimation of the diffusion function $\sigma: \mathbb{R}\mapsto (0,\infty)$ seems to have been undertaken by \cite{florens-zmirou:1993}. In \cite{jacod:2000}, kernel-smoothing estimation of $\sigma$ is utilized in a manner motivated by kernel estimation of the density functions, and a rate of convergence of the estimator of order $n^{-m/(2m+1)}$ is obtained for the case in which $\sigma$ is $m$ times continuously differentiable. A similar approach is undertaken in \cite{banon:1978}, \cite{jiang:knight:1997} and \cite{soulier:1998}. In \cite{hoffmann:1999a}, an $L_{p}$-loss estimator for $\sigma$ is developed which, under suitable conditions, has a minimax rate of convergence that is also of order $n^{-m/(2m+1)}$. The same author constructs estimators for the coefficients of a diffusion process based on adaptive wavelet thresholding (with respect to an unknown degree of smoothness in the coefficient functions) in \cite{hoffmann:1999b}. Wavelet methods were also used in \cite{genoncatalot:laredo:picard:1992} to estimate the drift and diffusion functions and again, the rate of convergence was found to match the classical rate of convergence for kernel-based estimators. A kernel smoothing approach is used in \cite{fan:fan:lv:2007} to define spatial and temporal estimators for $\sigma$. This technique was later adapted by \cite{hamrick:taqqu:2009} to test diffusions for stationarity.

In this paper, we propose instead to use a method based on the maximum penalized quasi-likelihood. We provide estimators of $\sigma$ that empirically behave at least as well as the other estimators developed in the literature up to the present time. We note additionally the connection between the method that we will propose and kernel estimators via reproducing kernel Hilbert spaces, as is presented, for example, in \cite[pp. 20-26]{eggermont:lariccia:2009}.

\section{Estimation Using Maximum Penalized Quasi-Likelihood}
\label{s:VolatilityEstimationForDiffusionProcess}

\subsection{Zero drift} The whole analysis below will be carried out assuming that $b \equiv 0$. Analysis for other values for the drift coefficient could be similarly carried out, albeit in a more complicated way. However, this would not eventually serve any purpose in the case of high-frequency data. In fact, any analysis regarding consistency and rates of convergence that applies to the case of zero drift applies also immediately to the case of non-zero drift. The reason is that, in view of Girsanov's theorem, under very mild integrability assumptions on $b / \sigma$, the probabilities $\mathbb{P}^{(b, \sigma)}$ and $\mathbb{P}^{(0, \sigma)}$ are equivalent. In fact, in Section \ref{s:SimulationStudy} we shall present a simulation study that explores the efficiency of our proposed diffusion coefficient estimator in the case of a diffusion with non-zero drift.

\subsection{Quasi-likelihood} Under the assumption that $b\equiv0$ and that our observations are of high frequency, for each $i=1, \ldots,n$, the conditional law of $Y_{t_{i}}$ given $(Y_{t_{0}},Y_{t_{1}}, \ldots,Y_{t_{i-1}})$ is approximately\footnote{Obviously, the quality of the approximation will improve as the mesh tends to zero.} normal with mean $Y_{t_{i-1}}$ and standard deviation $\sigma(Y_{t_{i-1}})\sqrt{t_{i}-t_{i-1}}$. Use of this approximation permits us to construct the \emph{weighted quasi-log-likelihood} of the sample, which is defined by
\begin{equation}
\label{eq:QuasiLogLikelihood}
\qll (\sigma \ | \ Y_{t_{0}}, \ldots,Y_{t_{n}}) = \frac{1}{n} \sum_{i=1}^{n} \Big\{- \log \Big( \sigma(Y_{t_{i-1}}) \Big) - \frac{1}{2} \Big(\frac{Y_{t_{i}}-Y_{t_{i-1}}}{\sigma(Y_{t_{i-1}})\sqrt{t_{i}-t_{i-1}}}\Big)^{2} \Big\}.
\end{equation}
(The symbol ``$\qll$'' for the above function stands for ``quasi-log-likelihood.'')

Of course, any estimator $\widehat{\sigma}$ that satisfies $\widehat\sigma (Y_{t_{i-1}})=|R_{t_{i-1}}|$ at the values in the observation set $\mathfrak{O} := \{Y_{t_{0}},Y_{t_{1}}, \ldots,Y_{t_{n - 1}} \}$, where
\begin{equation}
\label{eq:PointwiseVariances}
R_{t_{i-1}} := \frac{Y_{t_{i}}-Y_{t_{i-1}}}{\sqrt{t_{i}-t_{i-1}}}, \quad i =1, \ldots, n,
\end{equation}
would correspond to a maximum likelihood estimator with perfect fit to the data. (In particular, such an estimator is not unique.) As is typical in infinite-dimensional estimation problems (another example of which is nonparametric density estimation), naive interpolations of the points $\{(Y_{t_{i-1}},|R_{t_{i-1}}|)\}_{i = 1, \ldots, n}$ result in estimators that oscillate wildly and are nonsensical.
In order to effectively resolve this issue, one needs to impose some condition on the estimators. Nonparametric estimation procedures frequently call for the function being estimated to possess some degree of differentiability, which restores the well-posedness of the optimization problem. In this paper, we undertake such an approach, penalizing lack of smoothness in estimates of $\sigma$ via a \emph{maximum penalized quasi-likelihood} method.


\subsection{Maximum penalized quasi-likelihood estimator.} Since $\sigma>0$, the transformation
\[
\theta:=- \log(\sigma)
\]
is well-defined. The weighted quasi-likelihood is now
\begin{equation} \label{eq: qll}
\qll(\theta \ | \ Y_{t_{0}}, \ldots,Y_{t_{n}})=\frac{1}{n} \sum_{i=1}^{n} \Big\{ \theta(Y_{t_{i-1}}) - \frac{1}{2} R_{t_{i-1}}^{2}{e^{{2\theta (Y_{t_{i-1}})}}} \Big\}.
\end{equation}
For the numerical schemes that we will be developing, it is useful to rewrite the above expression in terms of the order statistics of the points in $\mathfrak{O}$. For $j=1, \ldots, n$, let $p_{j}$ denote the name of the rank $j$ point in $\mathfrak{O}$, with the smallest number having rank one. Set\footnote{We shall be assuming that there are no ties in the ranking. For an example of how we handle ties in the rankings, see Section \ref{s:ApplicationToRealData}. Note, however, that in theory ties occur with probability zero.}
\begin{equation}
\label{eq:OrderedData}
y_{j}:=Y_{p_{j}}\quad \textrm{and} \quad r_{j}:=R_{p_{j}}.
\end{equation}
For future reference, we set $y_0 := - \infty$ and $y_{n+1} := + \infty$; note that these two points do \emph{not} belong in our observations set $\mathfrak{O}$. With this new notation, we rewrite the weighted quasi-likelihood as
\[
\qll(\theta)=\frac{1}{n} \sum_{j=1}^{n} \Big\{ \theta(y_{j}) - \frac{1}{2} r_{j}^{2}{e^{{2\theta (y_{j})}}} \Big\},
\]
where we drop the dependence of $\qll$ on the sample. As mentioned before, we incorporate a term that will penalize estimates for lack of smoothness to produce the penalized quasi-log-likelihood
\begin{equation}
\label{eq:PenalizedQuasiLogLikelihood}
\pqll(\theta;m,\lambda)=\qll (\theta)-\frac{\lambda}{2}\int_{\mathbb{R}}|\theta^{(m)}(z)|^{2} \ud z,
\end{equation}
where $\lambda>0$ is a penalization factor, $m\in\mathbb{N}$, and $\theta^{(m)}$ is the derivative of order $m$ of $\theta$. We shall call $\theta_{*}$ a maximum penalized quasi-likelihood estimator if it is a solution to the problem
\[
\theta_{*}:=\arg \max_{\theta\in \mathbb{W}^{(m,2)}} \pqll(\theta;m,\lambda),
\]
where the maximization is over the space
\[
\mathbb{W}^{(m,2)}:=\Big\{ f:\mathbb{R}\longrightarrow \mathbb{R} \ \Big| \ \textnormal{$f^{(m-1)}$ exists, is absolutely continuous, and} \ \int_{\mathbb{R}}|f^{(m)}(z)|^{2} \ud z < \infty \Big\}.
\]

\noindent (Note that $\mathbb{W}^{(m,2)}$ as defined above is not the usual Sobolev space, which would be a particular subset of $\mathbb{L}^2 (\Real)$ --- functions in $\mathbb{W}^{(m,2)}$ might fail to be square-integrable.)

It can be shown that the maximizer $\theta_{*}$ of functionals like $\mathbb{W}^{(m,2)} \ni \theta \mapsto \pqll (\theta;m,\lambda)$ is a \emph{natural spline of order $2m-1$} with knots $\mathfrak{O} = \{y_1, \ldots, y_n \}$ --- see \cite{eubank:1988}.\footnote{The solution $\theta_{*}$ is an example of an $M$-type estimator --- see, for example, \cite{cox:1983}. For more information about the general problem of fitting splines with restricted sets of values using penalty functions, see \cite{utreras:1981}.} It follows that $\theta_{*}$ is at least $2m-1$ times differentiable, as well as piecewise polynomial of order $2m-1$ on all intervals $\{(y_{k}, y_{k+1})\}_{k=0, \ldots, n}$. (Recall that we are using the conventions $y_0 = - \infty$ and $y_{n+1} = + \infty$.) In particular, $\theta_*^{(2m-1)}$ is piecewise constant on the intervals $\{(y_{k}, y_{k+1})\}_{k=0, \ldots, n}$. The fact that $\theta_*$ is a natural spline means that all derivatives of order $m, \ldots, 2m-1$ vanish outside $[y_{1}, y_n]$. Since $\theta_*$ is at least $2m-2$ times continuously differentiable, this implies that $\theta_*^{(i)}(y_1) = 0 = \theta_*^{(i)}(y_n)$ for $i = m, \ldots, 2m-2$; furthermore, $\theta_*^{(2m-1)}(y_1 - ) = 0 = \theta_*^{(2m-1)}(y_n +)$, where $\theta_*^{(2m-1)}(y- )$ and $\theta_*^{(2m-1)}(y+ )$ will denote the left-hand and right-hand limit, respectively, of $\theta_*^{(2m-1)}$ at $y \in \Real$.

\begin{rem}
By definition, the estimators $\theta_{*}$ have a number of derivatives equal to zero at the points $y_1$ and $y_n$. Of course, the ``true'' $\theta = - \log (\sigma)$ is not expected to have such behavior near the endpoints. The method of penalization over-smooths the estimator close to the extreme observations $y_1$ and $y_n$ --- this situation is unavoidable, since there are very few observations near these points. It would be an interesting topic for future research to investigate in a rigorous way the severity of this effect close to the extreme observations.
\end{rem}

In all that follows, we shall be freely using the fact that $\theta_*$ has the above special structure. As $\theta_*^{(2m-1)}$ is not uniquely defined at the points $\mathfrak{O} = \{ y_1, \ldots, y_n\}$, we agree to pick $\theta_*$ such that $\theta_*^{(2m-1)}$ is right-continuous, i.e., we enforce $\theta_*^{(2m-1)} (y_i) = \theta_*^{(2m-1)} (y_i +)$ to hold for all $i = 1, \ldots, n$.

\begin{lem}
\label{thm:ConditionsOnTheta}

Let $\theta_{*}$ be the maximizer of equation (\ref{eq:PenalizedQuasiLogLikelihood}). Then, for any $\delta\in \mathbb{W}^{(m,2)}$,
\begin{equation}
\label{eq:FundamentalRelationship}
\frac{1}{n}\sum_{j=1}^{n}\Big\{ \delta(y_{j})\Big(1-r_{j}^{2}e^{2\theta_{*}(y_{j})}\Big) \Big\}=(-1)^{m-1}\lambda\int_{\mathbb{R}}\theta_{*}^{(2m-1)}(z)\delta'(z)d z.
\end{equation}

\end{lem}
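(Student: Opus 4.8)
The plan is to derive \eqref{eq:FundamentalRelationship} as a first-order (Euler--Lagrange type) stationarity condition for the functional $\pqll(\cdot\,;m,\lambda)$ at its maximizer $\theta_*$. Fix an arbitrary perturbation direction $\delta\in\mathbb{W}^{(m,2)}$ and consider the real-valued function $\eps\mapsto g(\eps):=\pqll(\theta_*+\eps\delta;m,\lambda)$. Since $\theta_*+\eps\delta\in\mathbb{W}^{(m,2)}$ for every $\eps\in\Real$ (the space is a linear subspace) and $\theta_*$ is the maximizer, $g$ has a global maximum at $\eps=0$, so $g'(0)=0$ provided $g$ is differentiable at $0$. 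The first step is therefore to justify differentiation under the sum and the integral: the quasi-likelihood part $\qll(\theta_*+\eps\delta)=\frac1n\sum_{j=1}^n\{(\theta_*+\eps\delta)(y_j)-\frac12 r_j^2 e^{2(\theta_*+\eps\delta)(y_j)}\}$ is a finite sum of smooth functions of $\eps$, so its $\eps$-derivative at $0$ is immediately $\frac1n\sum_{j=1}^n\delta(y_j)\big(1-r_j^2 e^{2\theta_*(y_j)}\big)$, which is exactly the left-hand side of \eqref{eq:FundamentalRelationship}. The penalty part contributes $-\frac{\lambda}{2}\int_\Real |\theta_*^{(m)}+\eps\delta^{(m)}|^2\,\ud z$, whose $\eps$-derivative at $0$ is $-\lambda\int_\Real \theta_*^{(m)}(z)\,\delta^{(m)}(z)\,\ud z$; here one only needs that both $\theta_*^{(m)}$ and $\delta^{(m)}$ lie in $\mathbb{L}^2(\Real)$, so the cross term is finite by Cauchy--Schwarz and the $\eps^2$ term is harmless. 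Setting $g'(0)=0$ then yields
\[
\frac1n\sum_{j=1}^n\delta(y_j)\big(1-r_j^2 e^{2\theta_*(y_j)}\big)=\lambda\int_\Real \theta_*^{(m)}(z)\,\delta^{(m)}(z)\,\ud z.
\]

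The remaining work is to transform the right-hand side of this identity into the form appearing in \eqref{eq:FundamentalRelationship}, namely $(-1)^{m-1}\lambda\int_\Real \theta_*^{(2m-1)}(z)\,\delta'(z)\,\ud z$. This is an integration-by-parts argument performed $m-1$ times, moving $m-1$ derivatives off $\delta$ and onto $\theta_*$:
\[
\int_\Real \theta_*^{(m)}\delta^{(m)}\,\ud z
=(-1)^{m-1}\int_\Real \theta_*^{(2m-1)}\,\delta'\,\ud z,
\]
where at each stage the boundary terms are collected. Because $\theta_*$ is a natural spline of order $2m-1$ with knots $\mathfrak{O}$, its derivatives of orders $m,\ldots,2m-1$ all vanish outside $[y_1,y_n]$ — concretely $\theta_*^{(i)}(y_1)=\theta_*^{(i)}(y_n)=0$ for $i=m,\ldots,2m-2$ and $\theta_*^{(2m-1)}(y_1-)=\theta_*^{(2m-1)}(y_n+)=0$ — which is precisely what is needed to kill the boundary contributions at $\pm\infty$ in each of the $m-1$ integrations by parts. (One must check that $\delta$ and its derivatives grow slowly enough not to spoil this; since $\theta_*^{(m)},\ldots,\theta_*^{(2m-1)}$ are supported in the compact set $[y_1,y_n]$, every boundary term is in fact evaluated at $y_1$ and $y_n$ rather than at infinity, so only the values of $\delta,\delta',\ldots,\delta^{(m-2)}$ at the finite points $y_1,y_n$ enter, and these are finite because $\delta\in\mathbb{W}^{(m,2)}\subset C^{m-1}$.) Substituting this identity back gives exactly \eqref{eq:FundamentalRelationship}.

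I expect the main obstacle to be the bookkeeping in the integration-by-parts step: one must track the alternating signs to arrive at the factor $(-1)^{m-1}$, and one must argue carefully that all intermediate boundary terms vanish. The cleanest way to handle the boundary terms is to exploit the fact that $\theta_*^{(m)}$ (and hence every higher derivative up to $\theta_*^{(2m-1)}$) is compactly supported on $[y_1,y_n]$, so that after each integration by parts the boundary term is of the form $\big[\theta_*^{(m+k)}\,\delta^{(m-1-k)}\big]_{y_1-}^{y_n+}$ with $k=0,1,\ldots,m-2$, and $\theta_*^{(m+k)}$ vanishes at both $y_1$ and $y_n$ (one-sided at the endpoints for the top-order derivative) by the natural-spline conditions recalled just before the lemma. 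A minor subtlety worth noting is that $\theta_*^{(2m-1)}$ is only piecewise constant and is defined at the knots by the right-continuity convention adopted above, but since it is bounded and the set $\mathfrak{O}$ has Lebesgue measure zero this does not affect the value of $\int_\Real\theta_*^{(2m-1)}\,\delta'\,\ud z$.
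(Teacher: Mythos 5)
Your proposal is correct and follows essentially the same route as the paper: compute the G\^ateaux derivative of $\pqll$ at $\theta_*$ in the direction $\delta$, set it to zero, and then integrate by parts $m-1$ times using the fact that $\theta_*^{(i)}$ vanishes outside $[y_1,y_n]$ for $i=m,\ldots,2m-1$ to convert $\lambda\int\theta_*^{(m)}\delta^{(m)}$ into $(-1)^{m-1}\lambda\int\theta_*^{(2m-1)}\delta'$. Your additional care with the boundary terms and the integrability of the cross term is a welcome refinement of the paper's argument, not a departure from it.
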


\begin{proof}
To see this fact, assume that $\theta_{*}$ is the maximizer of equation (\ref{eq:PenalizedQuasiLogLikelihood}). For any $\delta\in \mathbb{W}^{(m,2)}$,
\[
0= \frac{\partial}{\partial\epsilon}\pqll(\theta_{*}+\epsilon \delta;m,\lambda)\Big|_{\epsilon=0}=\frac{\partial}{\partial\epsilon}\Big(\qll(\theta_{*}+\epsilon \delta)-\frac{\lambda}{2}\int_{\mathbb{R}}|\theta_{*}^{(m)}(z)+\epsilon \delta^{(m)}(z)|^{2}dz \Big)\Big|_{\epsilon=0}.
\]
By \eqref{eq: qll}, we obtain
\[
\frac{\partial}{\partial\epsilon} \qll(\theta_{*}+\epsilon \delta) \Big|_{\epsilon=0} = \frac{1}{n} \sum_{j=1}^{n} \Big\{ \delta(y_{j}) - \delta(y_{j})r_{j}^{2}{e^{{2\theta_{*}(y_{j})}}} \Big\}.
\]
Furthermore, it is straightforward that
\[
\frac{\partial}{\partial\epsilon}\Big(\frac{\lambda}{2}\int_{\mathbb{R}}|\theta_{*}^{(m)}(z)+\epsilon \delta^{(m)}(z)|^{2}dz \Big)\Big|_{\epsilon=0} = \lambda\int_{\mathbb{R}}\theta_{*}^{(m)}(z)\delta^{(m)}(z)dz.
\]
Therefore, the first-order conditions for optimality become
\[
\frac{1}{n} \sum_{j=1}^{n} \Big\{ \delta(y_{j}) - \delta(y_{j})r_{j}^{2}{e^{{2\theta_{*}(y_{j})}}} \Big\} = \lambda\int_{\mathbb{R}}\theta_{*}^{(m)}(z)\delta^{(m)}(z)dz.
\]
Since $\theta_{*}^{(m)}$ vanishes outside $[y_1, y_n]$, integration-by-parts implies that
\[
\lambda\int_{\mathbb{R}}\theta_{*}^{(m)}(z)\delta^{(m)}(z)dz = -\lambda \int_{\mathbb{R}}\theta_{*}^{(m+1)}(z)\delta^{(m-1)}(z)dz.
\]
Now, repeatedly integrating by parts and using the fact the $\theta_{*}^{(i)}$ vanishes outside $[y_1, y_n]$ for $i=m, \ldots, 2m-1$ yields the result.
\end{proof}

By considering appropriate functions $\delta$ and substituting them into equation (\ref{eq:FundamentalRelationship}), we obtain the following result, which will be the basis for our algorithm to compute $\theta_*$.

\begin{prop}
\label{thm:ExtentionToDeltaMass}
The maximum penalized quasi-likelihood estimator $\theta_*$ is such that
\begin{equation} \label{eq:ImplementationEquation}
\theta_{*}^{(2m-1)}(y)=\frac{(-1)^{m}}{n \lambda}\Big(k-\sum_{j=1}^{k}r_{j}^{2}{e^{{2\theta_{*}(y_{j})}}}\Big)
\end{equation}
holds for all $y\in[y_{k},y_{k+1})$, where $k=1, \ldots,n$. (Recall that $y_{n+1} = + \infty$, by convention.)
\end{prop}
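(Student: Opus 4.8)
The plan is to substitute well-chosen test functions $\delta$ into the identity \eqref{eq:FundamentalRelationship} of Lemma~\ref{thm:ConditionsOnTheta} and to read off the piecewise-constant values of $\theta_*^{(2m-1)}$ directly. Write $c_k$ for the constant value of $\theta_*^{(2m-1)}$ on the interval $(y_k, y_{k+1})$, for $k = 0, 1, \ldots, n$. The special structure of $\theta_*$ recorded above — it is a natural spline, so its derivatives of orders $m, \ldots, 2m-1$ vanish outside $[y_1, y_n]$ — gives $c_0 = 0 = c_n$, and our right-continuity convention gives $\theta_*^{(2m-1)}(y) = c_k$ for every $y \in [y_k, y_{k+1})$. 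So the proposition is exactly the assertion $c_k = \frac{(-1)^m}{n\lambda}\big(k - \sum_{j=1}^k r_j^2 e^{2\theta_*(y_j)}\big)$ for $k = 1, \ldots, n$.

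The first step is to rewrite the right-hand side of \eqref{eq:FundamentalRelationship}. Because $\theta_*^{(2m-1)}$ equals $c_k$ on each $(y_k, y_{k+1})$, for any $\delta \in \mathbb{W}^{(m,2)}$ one has $\int_{\mathbb{R}} \theta_*^{(2m-1)}(z)\delta'(z)\,\ud z = \sum_{k=0}^{n} c_k\big(\delta(y_{k+1}) - \delta(y_k)\big)$; a summation by parts, using $c_0 = c_n = 0$ to annihilate all boundary contributions (including the ones nominally involving $y_0 = -\infty$ and $y_{n+1} = +\infty$), turns this into $-\sum_{j=1}^{n}(c_j - c_{j-1})\,\delta(y_j)$. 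Plugging this back into \eqref{eq:FundamentalRelationship} yields
\[
\frac{1}{n}\sum_{j=1}^{n}\delta(y_j)\big(1 - r_j^2 e^{2\theta_*(y_j)}\big) \;=\; (-1)^{m}\lambda\sum_{j=1}^{n}(c_j - c_{j-1})\,\delta(y_j),
\]
an identity holding for every $\delta \in \mathbb{W}^{(m,2)}$.

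The second step is to observe that $\mathbb{W}^{(m,2)}$ contains functions with arbitrary prescribed values at the $n$ distinct points $y_1, \ldots, y_n$ (for instance finite sums of narrow smooth bumps, which have compactly supported $m$-th derivatives and hence lie in $\mathbb{W}^{(m,2)}$). Since both sides of the displayed identity are linear in $(\delta(y_1), \ldots, \delta(y_n)) \in \mathbb{R}^n$, one may equate the coefficients of $\delta(y_j)$ to get $c_j - c_{j-1} = \frac{(-1)^m}{n\lambda}\big(1 - r_j^2 e^{2\theta_*(y_j)}\big)$, using $1/(-1)^m = (-1)^m$. Telescoping from $c_0 = 0$ then gives $c_k = \frac{(-1)^m}{n\lambda}\sum_{j=1}^k\big(1 - r_j^2 e^{2\theta_*(y_j)}\big)$, which is precisely \eqref{eq:ImplementationEquation}. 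As a consistency check, the admissible choice $\delta \equiv 1$ forces $\sum_{j=1}^n r_j^2 e^{2\theta_*(y_j)} = n$, exactly what makes $c_n = 0$, as it must.

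I do not anticipate a genuine obstacle: once Lemma~\ref{thm:ConditionsOnTheta} is in hand the argument is essentially bookkeeping. The one place demanding care is the summation-by-parts manipulation of $\int_{\mathbb{R}} \theta_*^{(2m-1)}\delta'$, where one must invoke the natural-spline structure (piecewise constancy, vanishing near $y_1$ and $y_n$) both to evaluate the integral interval by interval and to be sure the contributions at $\pm\infty$ truly disappear. Equivalently, and perhaps more transparently, the whole computation can be phrased distributionally: \eqref{eq:FundamentalRelationship} says that the distributional derivative of $\theta_*^{(2m-1)}$ is $\frac{(-1)^m}{n\lambda}$ times the discrete measure $\sum_{j=1}^n\big(1 - r_j^2 e^{2\theta_*(y_j)}\big)\,\delta_{y_j}$ (a weighted sum of unit point masses at the $y_j$), and integrating this measure from $-\infty$ up to $y$, where $\theta_*^{(2m-1)}$ starts at $0$, delivers \eqref{eq:ImplementationEquation} — this is the "extension to delta masses" that the statement's label alludes to.
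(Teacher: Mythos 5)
Your proof is correct, and it reaches \eqref{eq:ImplementationEquation} by a route that is genuinely different in execution from the paper's. The paper tests \eqref{eq:FundamentalRelationship} against smooth approximations $\delta_{y,N}(z)=\Phi(N(y-z))$ of the step function $\mathbb{I}_{(-\infty,y)}$ and passes to the limit, so that $-\delta_{y,N}'$ converges to a Dirac mass at $y$ and the right-hand side of \eqref{eq:FundamentalRelationship} produces $\theta_*^{(2m-1)}(y)$ directly; the technical content of that proof is the mollification and the limit interchange. You instead keep $\delta$ arbitrary, use the piecewise constancy of $\theta_*^{(2m-1)}$ together with $c_0=c_n=0$ to rewrite $\int_{\mathbb{R}}\theta_*^{(2m-1)}\delta'$ as $-\sum_{j=1}^{n}(c_j-c_{j-1})\delta(y_j)$, and then equate coefficients of the freely prescribable values $\delta(y_1),\ldots,\delta(y_n)$ before telescoping. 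The two are essentially ``integrated'' versions of one another (the step function is the cumulative sum of your bumps), but yours avoids the approximation argument entirely, makes the role of the natural-spline boundary conditions explicit (they are exactly what kills the contributions at $\pm\infty$ and pins $c_0=0$ for the telescoping), and yields the identity $\sum_{j=1}^{n} r_j^2 e^{2\theta_*(y_j)}=n$ as a free consistency check; the paper's version has the virtue of evaluating $\theta_*^{(2m-1)}$ pointwise without first invoking its piecewise-constant structure on the right-hand side. All the steps you flag as needing care (validity of the interval-by-interval fundamental theorem of calculus for $\delta\in\mathbb{W}^{(m,2)}$, the vanishing boundary terms, and the right-continuity convention at the knots $y_k$) do go through as you describe.
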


\begin{proof}
Fix $y\in(y_{k},y_{k+1})$ for some $k \in \set{1, \ldots,n}$ and consider the function $\delta_{y} = \mathbb{I}_{(-\infty,y)}$. In that case, $- \delta_y'$ is a Dirac mass at $y$. Of course, equation (\ref{eq:FundamentalRelationship}) cannot be applied directly to a non-differentiable function like $\delta_{y}$. To circumvent this issue, we use an approximating procedure. Consider the sequence of functions $\Real \ni z \mapsto \delta_{y,N}(z) = \Phi(N (y - z))$ for $N \in \Natural$, where $\Phi$ is the cumulative distribution function of the standard normal distribution. Note that $\delta_{y,N} \in\mathbb{W}^{(m,2)}$ for $m\geq1$ and that $(\delta_{y,N})_{N \in \mathbb{N}}$ converges pointwise to $\delta_{y}$ at all points except for $y$. By equation (\ref{eq:FundamentalRelationship}) and typical arguments utilizing the distributional convergence of $\Real \ni z \mapsto - (\partial / \partial z ) \delta_{y, N}(z)$ to the Dirac mass at $y$, the equation
\[
\sum_{j=1}^{k}\Big\{ \delta_{y,N}(y_{j})\Big(1-r_{j}^{2}e^{2\theta_{*}(y_{j})}\Big) \Big\} = (-1)^{m-1}n\lambda\int_{\mathbb{R}}\theta_{*}^{(2m-1)}(z) \Big\{ \frac{\partial}{\partial z} \delta_{y,N}(z) \Big\} d z
\]
is valid for all $N \in \mathbb{N}$. It also leads to  equation (\ref{eq:ImplementationEquation}) upon sending $N$ to infinity as soon as one notices that $\lim_{N \to \infty} \sum_{j=1}^{k}  \delta_{y,N}(y_{j}) = k$ for $y \in (y_k, y_{k+1})$ and that $\theta_{*}^{(2m-1)}$ is constant on $(y_k, y_{k+1})$. Recalling that we are considering the right-continuous version of $\theta_*^{(2m-1)}$, we conclude that equation \eqref{eq:ImplementationEquation} is true for all $y \in [y_{k},y_{k+1})$.
\end{proof}

%

\subsection{Computing the estimator.} \label{ss: computing}
Assume that values for the penalization parameters $\lambda$ and $m$ are prespecified.  We will discuss possible ways of choosing $\lambda$ and $m$ in subsection \ref{ss:ChoosingThePenalizationParameters}.
We know $Y_{t_{0}},Y_{t_{1}}, \ldots,Y_{t_{n}}$ and hence the pairs $(y_{1},r_{1}),(y_{2},r_{2}), \ldots,(y_{n},r_{n})$ given by equation (\ref{eq:OrderedData}). We want to determine $\theta_*^{(i)}(y_{j})$ for $i=0, \ldots,2m-1$ and $j=1, \ldots, n$. These values of $\theta_*$ and its higher-order derivatives along the observation set $\mathfrak{O} = \{y_1, \ldots, y_n \}$ can be used to fit a spline, which then generates an estimate of $\theta$ (and, therefore, of $\sigma =\exp(-\theta)$ as well) on the interval $[y_1,y_n]$. The basic idea of the algorithm is to use equation (\ref{eq:ImplementationEquation}) to obtain $\theta^{(2m-1)}$ at a particular $y_{j}$ and then to ``work downwards'' to the lower-order derivatives.

We introduce further notation and definitions to support for the description of the iterative procedure induced by equation (\ref{eq:ImplementationEquation}). For $a = (a^{(0)}, \ldots, a^{(m-1)})$, define $\Theta(\cdot;a)$ to be the spline of order $2m-1$ with knots $\mathfrak{O}$, with the properies that $\Theta^{(i)}(y_{1};a)=a^{(i)}$ for $i=0, \ldots, m-1$, $\Theta^{(i)}$ vanishes on $(- \infty, y_1)$ for $i=m, \ldots, 2m-1$, and
\begin{equation}
\label{eq:IterativeEquation}
\Theta^{(2m-1)}(y;a)=\frac{(-1)^{m}}{n\lambda}\Big(k-\sum_{j=1}^{k}r_{j}^{2}{e^{{2\Theta(y_{j};a)}}}\Big)
\end{equation}
for all $y\in [y_{k},y_{k+1}),\ k=1, \ldots,n$, where again we are considering the right-continuous version of $\Theta^{(2m-1)}(\cdot;a)$ and set $y_{n+1} = \infty$ by convention. Note that the above properties characterize the spline $\Theta (\cdot; a)$ entirely in terms of a recursive procedure; we shall discuss how to compute all the values of $\Theta (\cdot; a)$ given $a = (a^{(0)}, \ldots, a^{(m-1)}) \in \Real^m$ in the sequel. For the time being, and as a warm-up for the algorithm that will be presented below, note that equation \eqref{eq:IterativeEquation} implies that
\begin{equation}
\label{eq:IterativeEquationBetter}
\Theta^{(2m-1)}(y_{k};a) = \Theta^{(2m-1)}(y_{k-1};a) + \frac{(-1)^{m}}{n\lambda} \Big(1 - r_{k}^{2} e^{{2\Theta(y_{k};a)}}\Big), \quad k=1, \ldots, n,
\end{equation}
upon agreeing that $\Theta^{(2m-1)}(y_{0};a) = 0$ as a matter of convention.

In view of Proposition \ref{thm:ExtentionToDeltaMass}, and since the estimator $\theta_*$ must be a natural spline, it will hold that $\theta_* (\cdot) = \Theta(\cdot; a_{*})$ where $a_{*} \in \Real^m$ is such that $\Theta^{(i)}(y_{n}; a_{*})=0$ for $i=m, \ldots,2m-1$. Therefore, our goal is to obtain the root of the nonlinear equation $F(a)=0$, where the mapping $F:\mathbb{R}^{m}\longrightarrow\mathbb{R}^{m}$ is defined by
\[
F(a):= \left(\Theta^{(i)}(y_{n};a) \right)_{i=m, \ldots, 2m-1}.
\]
In order to do this, an efficient way of computing $F(a)$ for a given $a = (a^{(0)}, \ldots  a^{(m-1)}) \in \Real^m$ is required. The following pseudocode illustrates the computation of $F(a)$ in the case $m = 2$, which is the value we mostly use for the numerical computations. It is straightforward to adapt the code for any value of $m \in \Natural$.

\bigskip

\texttt{Input} $a=(a^{(0)},\ a^{(1)})=(\Theta^{(0)}(y_{1}; a),\ \Theta^{(1)}(y_{1}; a))$.

\texttt{Set} $\Theta^{(2)}(y_{1};a) = 0, \ \Theta^{(3)}(y_{0};a) = 0.$

\texttt{For} $k= 1,\ldots,n - 1$, \texttt{set:}
\begin{align*}
\quad \quad \Theta^{(3)}(y_k; a) &= \Theta^{(3)}(y_{k-1}; a) + (n \lambda)^{-1} \Big(1 - r_{k}^{2}e^{2 \Theta^{(0)} (y_{k}; a)} \Big),\\
\quad \Theta^{(2)}(y_{k+1};a) &= \Theta^{(2)}(y_{k};a)+\Theta^{(3)}(y_{k};a)(y_{k+1}-y_{k}),\\
\quad \Theta^{(1)}(y_{k+1};a) &=\Theta^{(1)}(y_{k};a)+ \Theta^{(2)}(y_{k};a)(y_{k+1}-y_{k}) + (1/2) \Theta^{(3)}(y_{k};a) (y_{k+1}-y_{k})^2,\\
\quad \Theta^{(0)}(y_{k+1};a) &=\Theta^{(0)}(y_{k};a)+ \Theta^{(1)}(y_{k};a)(y_{k+1}-y_{k}) + (1/2) \Theta^{(2)}(y_{k};a) (y_{k+1}-y_{k})^2 \\
\ &+ (1/6) \Theta^{(3)}(y_{k};a) (y_{k+1}-y_{k})^3.
\end{align*}

\texttt{Next} k

\texttt{Set} $\Theta^{(3)}(y_n; a) = \Theta^{(3)}(y_{n-1}; a) + (n\lambda)^{-1} \Big(1 - r_{n}^{2}e^{\Theta^{(0)} (y_{n}; a)} \Big)$

\texttt{Return} $F(a)=(\Theta^{(2)}(y_{n};a),\Theta^{(3)}(y_{n};a))$.

\bigskip

Note that the philosophy of our algorithm is a variant of the so-called ``shooting method'' \cite[page 177]{asher:petzold:1988}. When close to the root $a_{*}$, Newton's method can be utilized to get a better ``aim,'' which will result in fast convergence of this iterative scheme. (Newton's method has to be used with care --- if far away from the root, the method is not likely to work and the numerical scheme will fail to converge. For further comments on this problem, see subsection \ref{ss:RegardingConvergence}.) Newton's method requires computation of the partial derivatives $(\partial/\partial a^{(i)})F$,which can be easily computed along with $F(a)$. Indeed, note that $\varphi_{i}(\cdot;a):=(\partial/\partial a^{(i)})\Theta^{(j)}(\cdot;a)$ for $i=0, \ldots, m-1$ and $j=0, \ldots, 2 m-1$ satisfy
\[
\varphi_{i}^{(j)}(y_1 ;a) = \frac{\partial \Theta^{(j)} (y_1)}{\partial a^{(i)}} = \frac{\partial a^{(j)}}{\partial a^{(i)}} = \left \{ 
	\begin{tabular}{ll}
		1, & if $i = j$ \\
		0, & if $i \neq j$ \\
	\end{tabular} \right..
\]
Furthermore, upon differentiating \eqref{eq:IterativeEquation}, we obtain
\[
\varphi_{i}^{(2m-1)}(y;a) = \frac{\partial \Theta^{(2m-1)} (y)}{\partial a^{(i)}} = \frac{(-1)^{m+1}}{n\lambda}\sum_{j=1}^{k}r_{j}^{2}{e^{{2\Theta^{(0)}(y_{j};a)}}}\varphi^{(0)}_{i}(y_{j};a)
\]
for all $y \in [y_{k},y_{k+1})$, where $k=1, \ldots,n$.
Therefore, computing these derivatives can be performed for little extra cost in the same iterative procedure used to compute $\Theta(y_{j};a)$ for $j=1, \ldots, n$. Indeed, one has to simply differentiate the iterative equations with respect to $a$ and obtain iterative equations for the derivatives
As soon as the information about the $m$ partial first-order derivatives of $F$ at a particular $a\in\mathbb{R}^{m}$ is obtained, we can implement Newton's method for finding the root of the equation $F(a)=0$. Newton's method proceeds by successive approximation. Consider an initial guess $a_{1}=(a_{1}^{(i)})_{i=0,...,m-1}$ and inputs $\delta > 0$ and $\epsilon > 0$, where $\delta$ controls the step size in Newton's method and $\epsilon$ determines when Newton's method is terminated.\footnote{In most of the examples in this paper, we took $\epsilon=10^{-10}$ and $\delta=0.1$, the latter to prevent overstepping in Newton's method.} Then, for $k = 1, \ldots$, while $|F(a_{k})| \geq \epsilon$ (where $|\cdot|$ denotes the usual Euclidean norm), one sets $a_{k+1}=a_{k} - \delta\Phi^{-1}(a_{k}) F(a_{k})$. In the previous formula, $\Phi$ is the $m \times m$ matrix with entry $\varphi_i^{(j)} (y_n; a)$ in the $i^{\textrm{th}}$ row and the $j^{\textrm{th}}$ column for $i=0, \ldots, m-1$ and $j=0, \ldots, m-1$. (Note that we are numbering rows and columns from $0$ to $m-1$, in order to be consistent with our notation.)

\begin{rem}
There is accompanying software that implements the above algorithm, together with instructions regarding its use, available upon request from the authors.
\end{rem}

\subsection{Choosing the penalization parameters}
\label{ss:ChoosingThePenalizationParameters}

In the literature on nonparametric regression using the penalized likelihood method (also called the regularization method), the most typical choice for the order of differentiation to penalize is $m=2$ --- see, for example, \cite{cox:osullivan:1990} or \cite{green:silverman:1994}. The choice $m=2$ gives rise to estimators that are natural cubic splines and are visually very attractive. The use of $m=1$ results in estimated functions that are quite ``wiggly'' --- see for example, Figure \ref{fig:BasicConvergenceExample}. Use of $m \geq 3$ is computationally involved; for this reason, we refrain from such practice.

The choice of the penalization coefficient $\lambda$ is more subtle. One can either use cross-validation techniques --- see, for example, \cite{fan:yao:2005} or \cite{green:silverman:1994}. However, it is often that case that simple visual inspection of the graphs is sufficient.

Theoretical results from the theory of nonparametric regression using the penalized likelihood method provide a hint in understanding how the penalization coefficient $\lambda$ should decrease with increasing sample size $n$ when the degree of smoothness (as given by $m$) of the target function is fixed. More precisely, for fixed $m$, in it is conjectured that the choice $\lambda_n \sim n^{-2m / (2m + 1)}$ as $n \to \infty$ will result in convergence of the estimator to the true value of the order $n^{- m / (2m+1)}$. For theoretical background, see \cite{cox:osullivan:1990}. In the next section, we shall provide empirical results on rates of convergence for $m=1$ and $m=2$ that seem to support this conjecture.

\subsection{Practical remarks regarding convergence of our algorithm}
\label{ss:RegardingConvergence}

If one uses the proposed penalization $\lambda_n \sim n^{-2m / (2m + 1)}$ for a given $m \in \Natural$ as $n \to \infty$, the numerical scheme suggested in Subsection \ref{ss: computing} tends to be unstable for large sample sizes. In order to get a feeling for the reason, note that $n \lambda_n \sim n^{1 / (2m+1)}$ converges to infinity when $n \to \infty$, albeit slower than $n$. Since one has to repeat the iteration given by \eqref{eq:IterativeEquationBetter} $n$ times to obtain $F (a)$, it is reasonable to suspect (and it actually happens in practice) that for choices of $a \in \Real^m$ that are far away from $a_*$ the vector $\Phi^{-1}(a) F(a)$ (recall that $\Phi$ is the $m \times m$ matrix of first order partial derivatives of $F$) consists of entries with huge magnitude. This situation results in failure of convergence of the algorithm, since the updating step in Newton's method takes one father away from the sought-after root. In order to overcome this difficulty, it is often desirable to first consider a subset of the sample, effectively thinning the observations. Once the data set has been thinned enough in order for the method to converge, one can use the root derived from the thinned data set as initial seed for a denser subset of the data. Continuing this way, one finally computes the estimator of $\sigma$ using the whole sample. Note that we use this method in order to obtain the estimators in the following two sections.

\section{Simulation Study and Empirical Rates of Convergence}
\label{s:SimulationStudy}

In this section, we shall conduct an empirical investigation of the rate at which our estimators converge to the true volatility function.

As a simple benchmark case when $m=1$, we consider the case of a driftless Brownian motion with $\sigma \equiv 3$. We generate a path of the Brownian motion using \emph{exact} simulation, and produce data points with a step size of $\Delta t =2^{-17}$ over the unit interval. Continuing, we remove every other realization of the sample path to arrive at a ``reduction'' of the original sample path with step sizes of $\Delta t=2^{-16}$. We then remove every other observation three more times to create yet another reduction with step size $\Delta t=2^{-13}$. The resulting estimates $\sigma_{*}$ of the true volatility function $\sigma \equiv 3$ are shown in Figure \ref{fig:BasicConvergenceExample} using $\lambda = 20 (\Delta t)^{2/3}$. (This choice is consistent with the discussion in subsection \ref{ss:ChoosingThePenalizationParameters}.) Notice that convergence to the constant volatility function $\sigma\equiv3$ seems to be quite fast.

\begin{figure}
 \centering
   \includegraphics[width=0.50\textwidth]{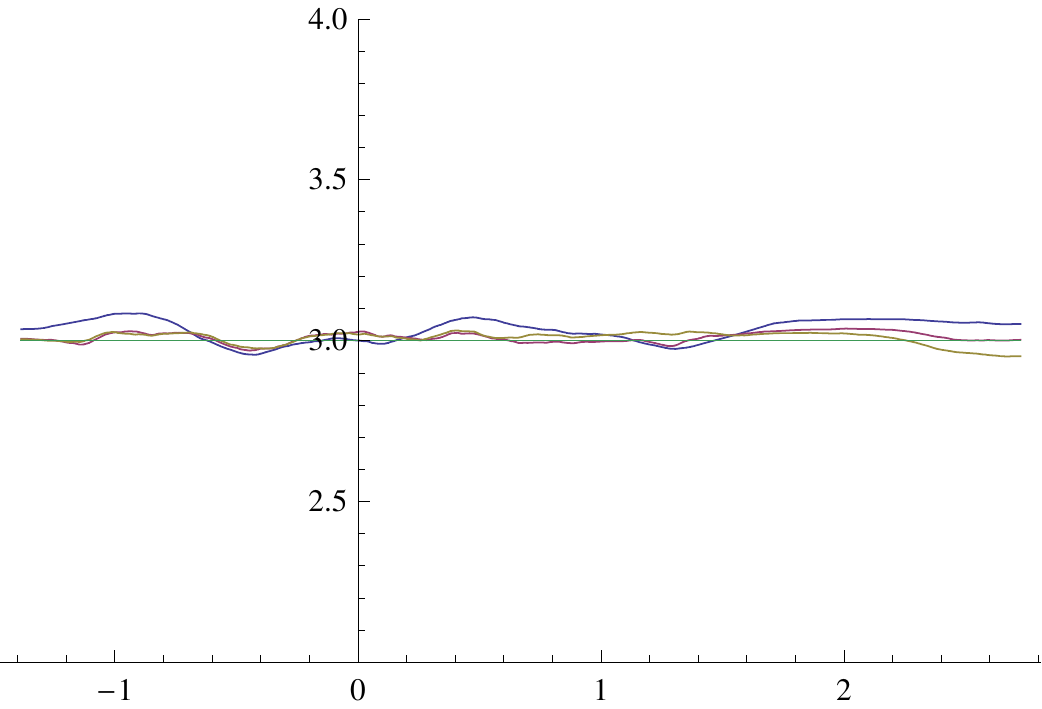}
 \setcaptionwidth{0.85\textwidth}
 \caption{We generate a sample path with $\Delta t=1/2^{17}$, no drift, and $\sigma \equiv 3$ and we also create two reductions of this sample path with $\Delta t=1/2^{16}$ and $\Delta t=1/2^{13}$. Along with the true $\sigma$, the resulting $\sigma_{*}$ are shown when $\lambda = 20 (\Delta t)^{2/3}$ and $m=1$. 
}
\label{fig:BasicConvergenceExample}
\end{figure}

To study the empirical rates of convergence more precisely, and to illustrate that our method works well even for diffusions with non-zero drift, we present a more involved example. Consider the diffusion $Y$ with $Y_0 = 1/2$ and dynamics
\begin{equation} \label{eq: cool diff}
\ud Y_t = - Y_t^2 (1 - Y_t) \ud t + Y_t (1 - Y_t) \ud W_t, \quad t \in [0, 1].
\end{equation}
The diffusion $Y$ is $(0,1)$-valued; in fact, a straightforward use of It\^o's formula shows that
\begin{equation} \label{eq: cool diff closed form}
Y_t = \frac{\exp(W_t - t/2)}{1 + \exp(W_t - t/2)}, \quad t \in [0, 1].
\end{equation}
Note that $Y$ does \emph{not} have zero drift and that $\sigma(y) = y(1 - y)$. Furthermore, equation \eqref{eq: cool diff closed form} gives a way to simulate $Y$ without any discretization error, since one only needs to simulate $W$, which can be done exactly. We simulate a sample path with a step size $2^{-25}$ and $T=1$, and also create  reduced versions of this sample path with step sizes $\Delta t$ of $2^{-10}, 2^{-11}, \ldots, 2^{-25}$. For the case $m=1$, we use $\lambda = 30(\Delta t)^{ 2 / 3}$, while for the case $m=2$ we use $\lambda = 20(\Delta t)^{ 4/5}$.
\begin{figure}
 \centering
 \subfloat[]{\label{fig:rateofconvergence-samplepath02-onedimenaional}
   \includegraphics[width=0.40\textwidth]{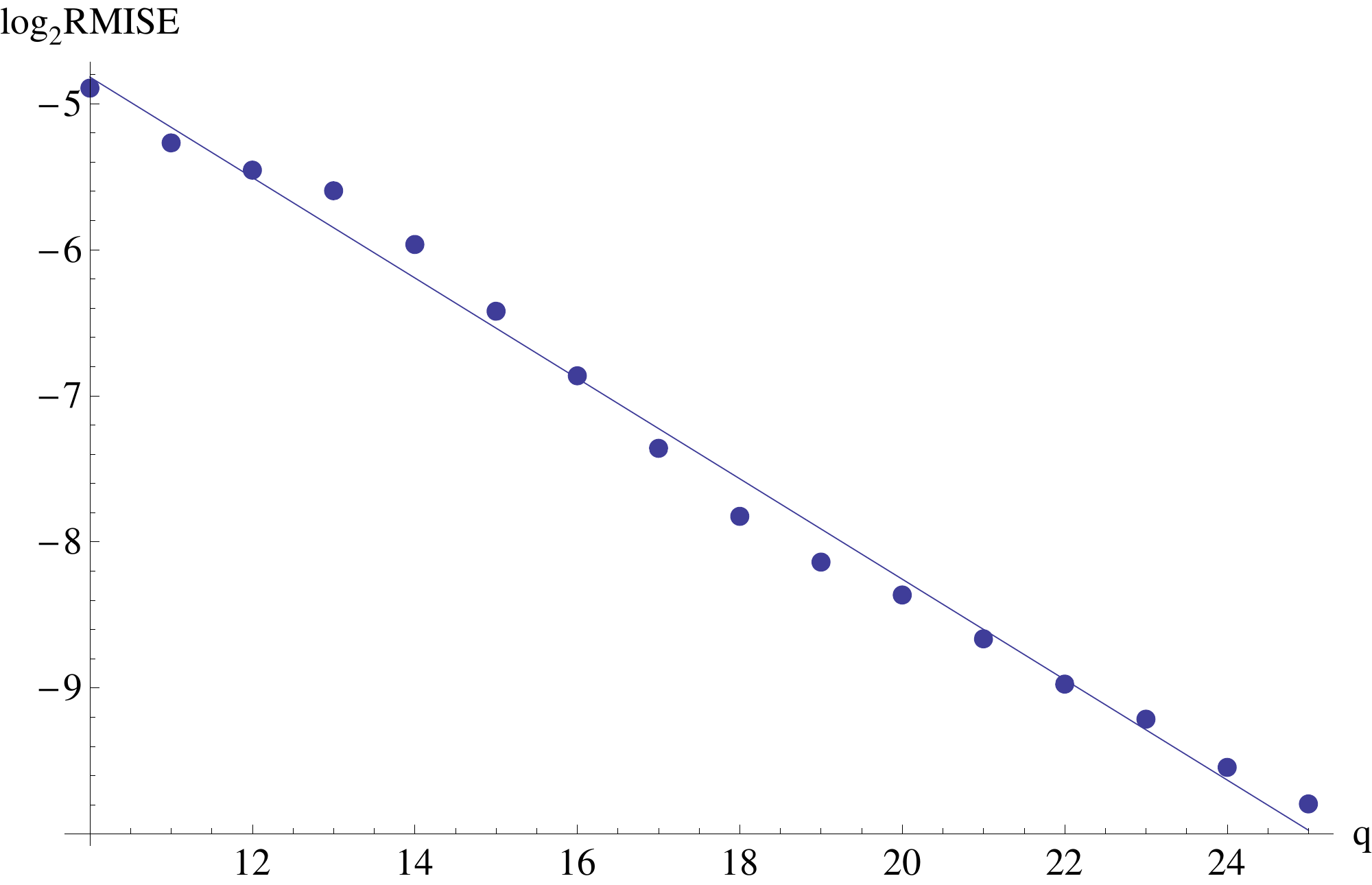}}
   \quad%
 \subfloat[]{\label{fig:rateofconvergence-samplepath02-twodimensional}
   \includegraphics[width=0.40\textwidth]{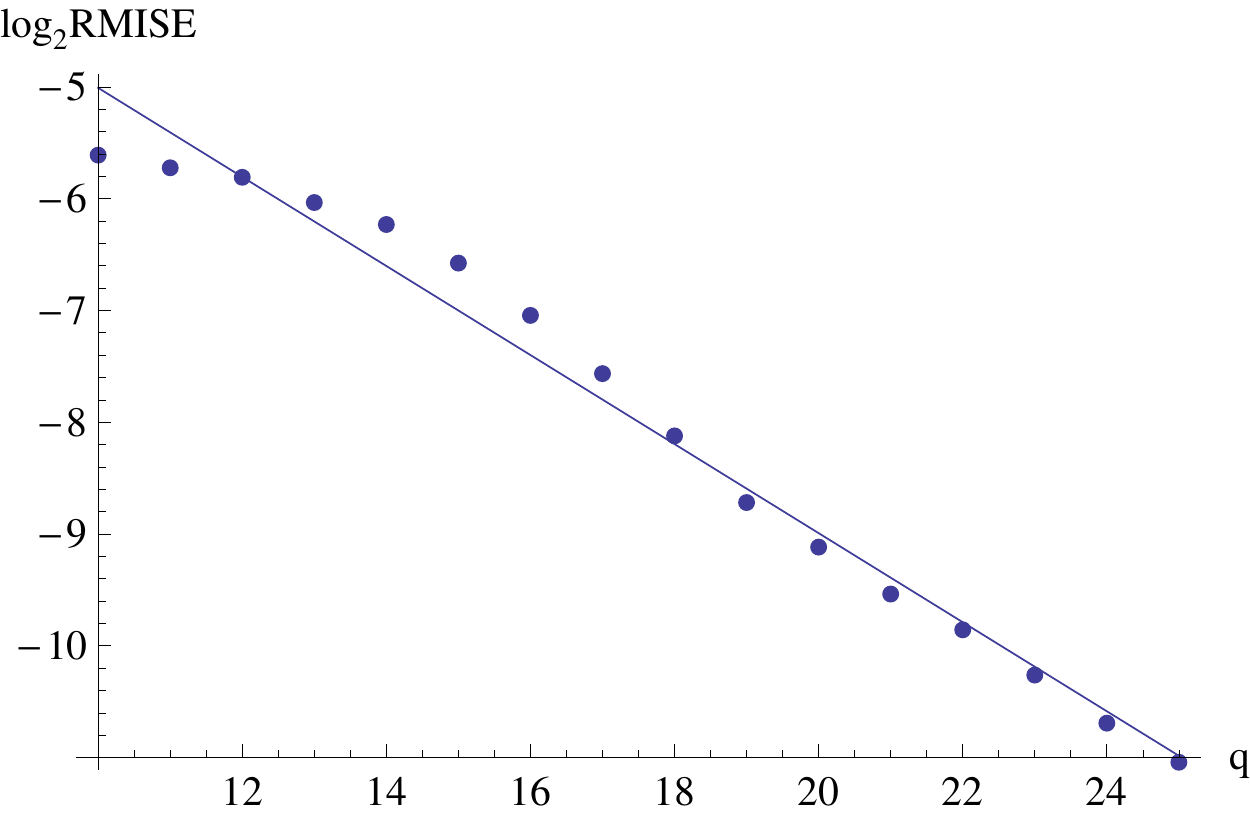}}
   \quad%
 \setcaptionwidth{0.85\textwidth}
 \caption{We generate a sample path with $\Delta t=1/2^{25}$ from the diffusion whose dynamics are given by equation \eqref{eq: cool diff}. Then, we create $15$ reductions of this sample path with $\Delta t=1/2^{10},...,1/2^{25}$. (A) In the case $m = 1$, we plot $\log_{2}{\textrm{RMSE}(\sigma_{*}^{(1,q)},\sigma)}$ against $q$ for $q=10,...,25$. The least-squares line is $-1.383-0.343 q$, which is consistent with results from kernel-based estimation schemes which suggest that the slope should be $- 1/3 \approx - 0.333$. (B) In the case $m=2$, we again plot $\log_{2}{\textrm{RMSE}(\sigma_{*}^{(2,q)},\sigma)}$ against $q$ for $q=10,...,25$. The least-squares line is $-1.024-0.398 q$, which is consistent with our conjecture which suggests that the slope should be $- 2/5 = - 0.4$.}
 \label{fig:RateOfConvergence}
\end{figure}
For both $m=1$ and $m=2$ and $q=10, \ldots, 25$, we produce penalized maximum quasi-likelihood estimators $\sigma_{*}^{(m,q)}$. We then compute the root mean-integrated squared error
\[
\textrm{RMISE}(\sigma_{*}^{(m,q)},\sigma) = \sqrt{ \frac{T}{n} \sum_{j = 1}^{n} \left( \sigma_{*}^{(m,q)}(y_j)-\sigma(y_j) \right)^{2} }
\]
where $n = 2^{q} T$ is the sample size, as a proxy to the quantity
\[
\sqrt{\int_{0}^{T}(\sigma_{*}^{(m,q)}(Y_t)-\sigma(Y_t))^{2} d t} = \sqrt{\int_{m_{T}}^{M_{T}} \left( \frac{\sigma_{*}^{(m,q)}(y)}{\sigma(y)} - 1 \right)^{2} L_T^Y (y) d y},
\]
where $L_T^Y (y)$ is the semimartingale local time of $Y$ at level $y \in \Real$ accumulated up to time $T$. We then plot $\log_{2}(\textrm{RMISE}(\sigma_{*}^{(m,q)},\sigma))$ against $q$ and execute least-squares linear fits to determine the rate of convergence as we ``fill in'' the sample path --- see Figure \ref{fig:RateOfConvergence}. The regression line in the case $m=1$ was $-1.383-0.343 q$.  In the case $m=2$, the regression line is $-1.024-0.398 q$. In both cases, the slope of the line is roughly consistent with convergence results for kernel-based estimation schemes, which suggests that the slope should be $-m/(2m+1)$, giving $-1/3$ for $m=1$ and $-2/5$ for $m=2$. We, therefore, conjecture (as already mentioned) that the rate of convergence is of order $n^{- m / (2m+1)}$ when one chooses $m$ as penalization differentiation order and uses $\lambda_{n} \sim n^{ - 2m / (2m+1)}$ as $n \to \infty$. Note that this conjectured rate is optimal, as demonstrated in \cite{hoffmann:1999}.

\section{Application to Exchange Rates and Interest Rates}
\label{s:ApplicationToRealData}

\subsection{Exchange Rates.} The three exchange rates were taken from the Federal Reserve Economic Data (FRED) database, which is maintained by the St. Louis branch of the Federal Reserve Bank. (The FRED database can be accessed at \url{http://research.stlouisfed.org/fred2}. Registering for a username and password is required as of May 31, 2010.) For purposes of comparability, we study all three exchange rates from January 4, 1999 through May 21, 2010 (the Euro debuted on at the beginning of 1999). The FRED database contains noon buying rates in New York City for cable transfers payable in foreign currencies. It includes data only for days on which financial markets are open.


Though exchange rate data at weekly and monthly reporting frequencies is also available through the FRED database, our nonparametric estimation technique performs best when the frequency of the data is relatively high. However, we did not attempt to use intra-day exchange rate data, even though such data is increasingly available for free on the Internet. (See, for example, the weblink \url{http://www.forexrate.co.uk/forexhistoricaldata.php}.) Though bid-ask spreads on exchange rates are generally quite low, much of the apparent volatility observed in financial asset prices observed with sufficiently high frequency (for example, one-minute time intervals) is due to buyers buying at the ask price and sellers selling at the bid price. This ``toggling'' between bid and ask prices during very high frequency trading biases volatility estimation upwards, rendering rules like the ``square root of time multiplied by the volatility'' ineffectual --- see \cite{jorion:2009}.

We first treat the data by computing the $R_{t_{i}}$ mentioned in equation (\ref{eq:PointwiseVariances}). In particular, we are careful to compute $\sqrt{t_{i}-t_{i-1}}$ precisely and in calendar time, rather than by assuming that a calendar year has approximately 250 trading days and then dividing $Y_{t_{i}}-Y_{t_{i-1}}$ by $\sqrt{1/250}$. This adjustment is small but potentially important, particularly when using high frequency data. For example, the standard deviation of the returns on the USD/EUR exchange rate over weekdays (or holidays) is $0.6457733\%$. Over weekends, the standard deviation is $0.675802\%$. With 624 weekend/holiday trading days and 2866 weekday returns, the variance ratio test yields a test statistic of 1.0953 and a rejection of the null hypothesis of equality of variances at the 10\% (though not the 5\%) level of significance. In general, we will appropriately adjust for the length of the underlying time intervals, since there appears to be a slight but meaningful tendency for information to be released over the weekend that causes Friday-to-Monday returns to have somewhat higher dispersion than regular weekday returns.

We also pre-process the data in one additional way to make it suitable for the numerical scheme articulated in Section \ref{s:VolatilityEstimationForDiffusionProcess}. In this section, we required that sorted values of the asset prices (denoted $y_{j}$) be distinct from one another. In almost all diffusion models, the theoretical probability that two values $Y_{t_{i}}$ and $Y_{t_{j}}$ are equal is zero when $i \neq j$. In practical situations in financial markets, in which asset prices are only recorded to a finite number of decimal places, ties are possible and sorting the raw asset price data becomes an ill-defined task. To manage this problem, we tentatively compute the values of $R_{t_{i}}^{2}$, take their mean, and add to the raw asset price data a Gaussian random number with mean zero and variance equal to a very small constant times the mean of the $R_{t_{i}}^{2}$. We then recompute $|R_{t_{i}}|$. By slightly perturbing the raw data, all ties are randomly broken and, in practice, the movement in the time series is very negligible. (One could also perform the perturbation only to the tied data --- in practice, the two approaches lead to almost identical output.)

\begin{figure}
 \centering
 \subfloat[]{\label{fig:USDEURExchangeRateVolatility}
   \includegraphics[width=0.45\textwidth]{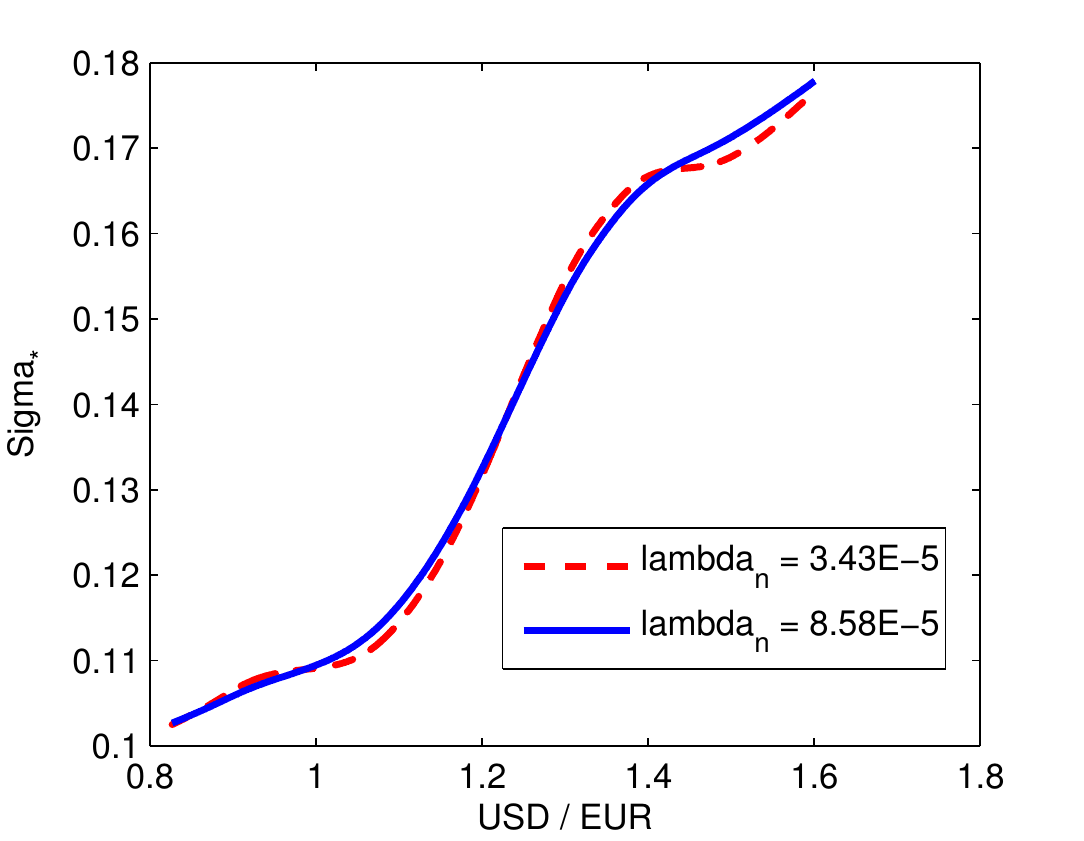}}
   \quad%
 \subfloat[]{\label{fig:USDGBPExchangeRateVolatility}
   \includegraphics[width=0.45\textwidth]{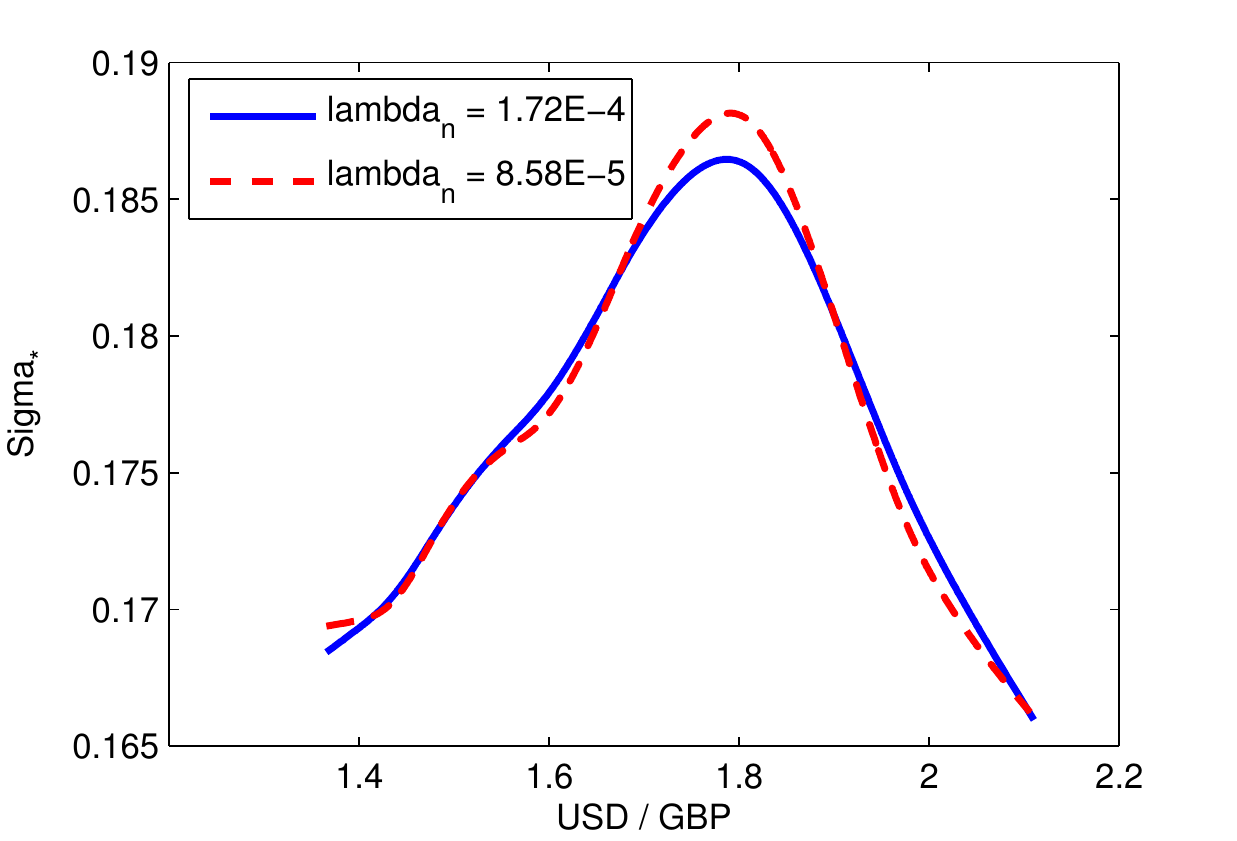}}
   \quad%
 \subfloat[]{\label{fig:JPYUSDExchangeRateVolatility}
   \includegraphics[width=0.45\textwidth]{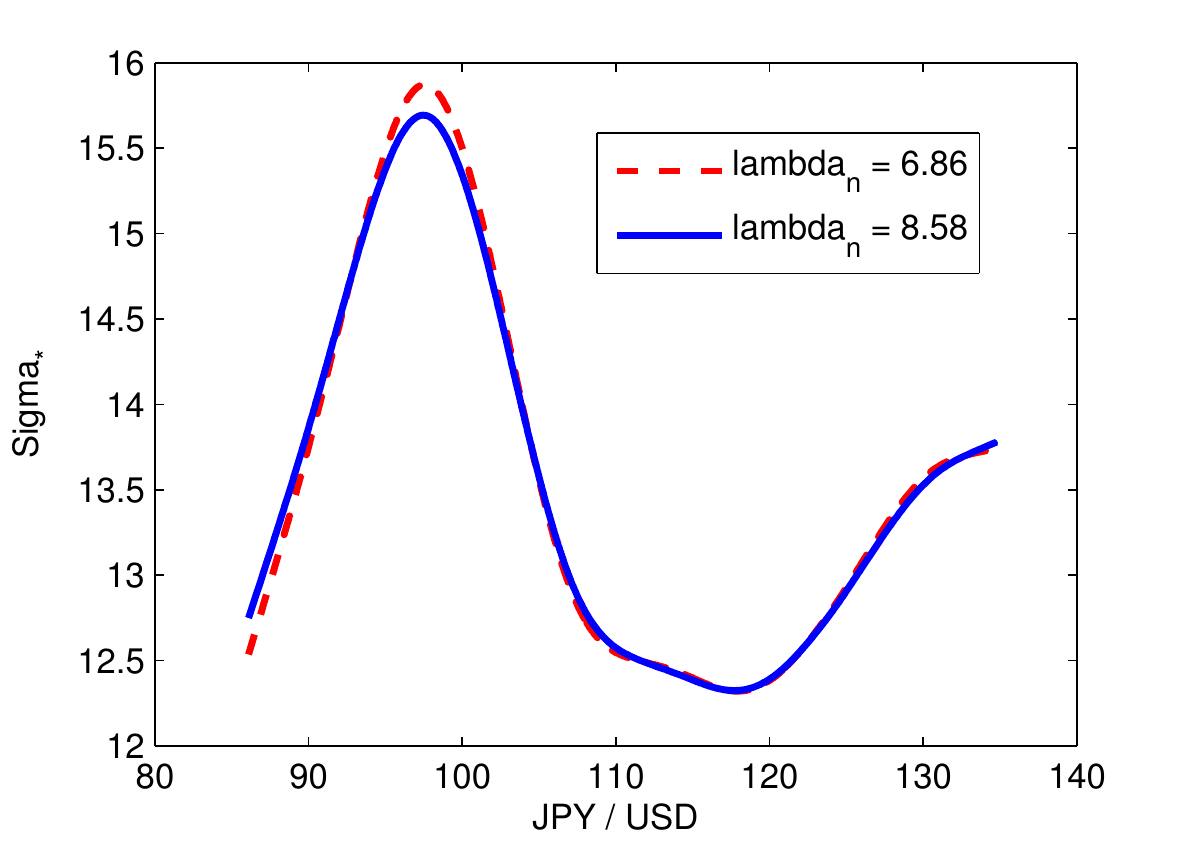}}
   \quad%
 \subfloat[]{\label{fig:EURUSDExchangeRateVolatility}
   \includegraphics[width=0.45\textwidth]{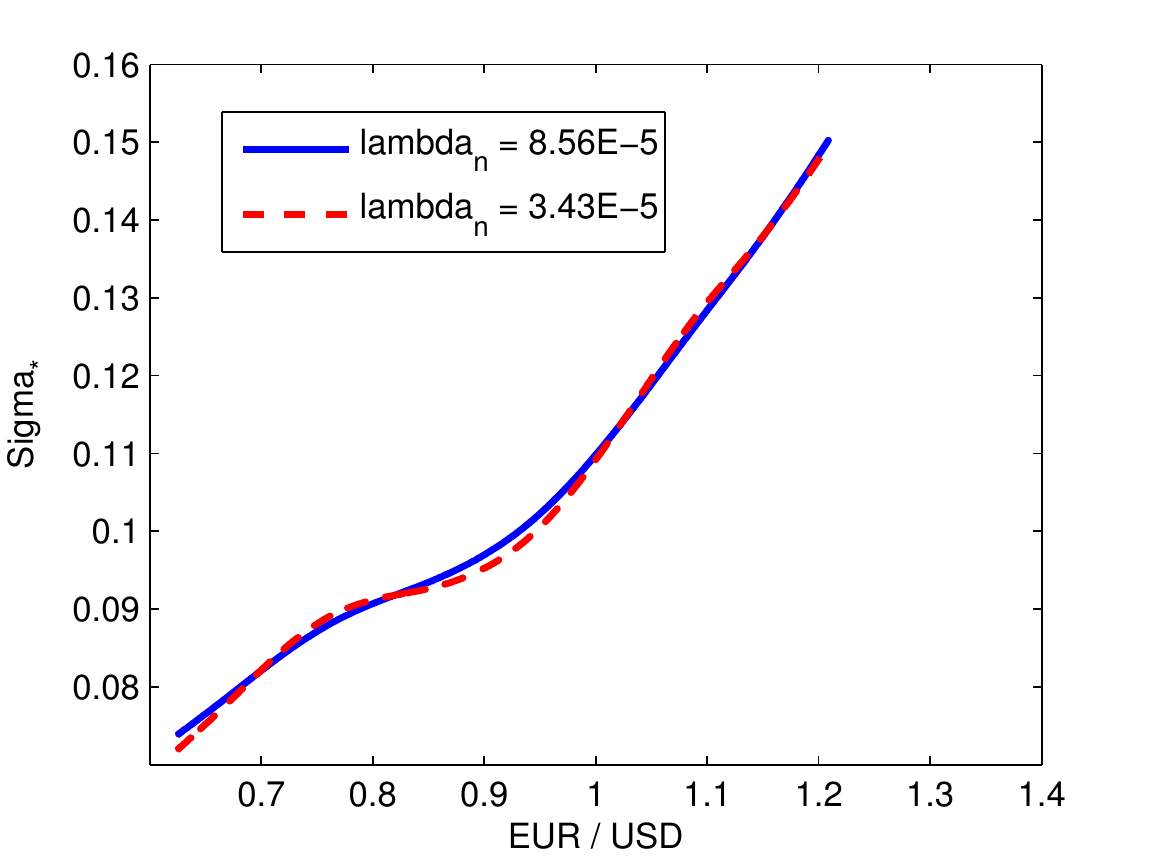}}
   \quad%
 \setcaptionwidth{0.85\textwidth}
 \caption{Under the assumption that the USD/EUR, USD/GBP, JPY/USD, and EUR/USD exchange rates for the period from January 4, 1999 to May 21, 2010 are governed by diffusion processes, these figures show estimates $\sigma_{*}$ of the diffusion function $\sigma$.}
 \label{fig:VariousExchangeRateVolatilities}
\end{figure}

In Figure \ref{fig:USDEURExchangeRateVolatility}, we plot the estimates of the USD/EUR exchange rate volatility for two different values of $\lambda$ for the period from January 4, 1999 to May 21, 2010. Consistent with the discussion in Section \ref{s:SimulationStudy}, we choose $\lambda$ to be $0.02n^{-4/5}$ and $0.05n^{-4/5}$, where $n$ is equal to the number of returns in our time series --- 2866 days, to be precise. 
The two estimates of the volatility in Figure \ref{fig:USDEURExchangeRateVolatility} seem to indicate that using a linear function for $\sigma$ in equation (\ref{eq:Diffusion}) is not appropriate. Instead, the volatility of the USD/EUR exchange rate seems to increase at a fairly rapid pace around values of $1.30$ USD/EUR.

In Figure \ref{fig:USDGBPExchangeRateVolatility}, we similarly show the estimates $\sigma_{*}$ for the USD/GDP exchange rate. The USD/GDP exchange rate volatility varies between 16.5\% and 19\% over $[y_1,y_n]$ from equation \ref{eq:MinimumAndMaximum}. Note that the while the estimated diffusion function for the USD/EUR exchange rate appears to be monotonically increasing, the estimated USD/GBP exchange rate is not. Generally, diffusion models of financial phenomena assume that the diffusion function is either constant (like the Vasicek model) or that the diffusion function is a constant multiplied by some increasing function (geometric Brownian motion or the Cox-Ingersoll-Ross process). We note that it would be interesting, and potentially the focus of future work, to use our estimated diffusion functions in hypothesis testing --- for example, in this setting one could test whether the assumption that the USD/GDP exchange rate volatility is constant can be rejected at a  high level of statistical significance.

Finally, in Figure \ref{fig:JPYUSDExchangeRateVolatility} and \ref{fig:EURUSDExchangeRateVolatility}, we show the estimates $\sigma_{*}$ for the JPY/USD and EUR/USD exchange rates, respectively. Note that the estimated diffusion function for the JPY/USD exchange rate is also not monotonic and that it peaks when the exchange rate is between 95-100 Japanese yen per U.S. dollar. The estimated diffusion function associated to the EUR/USD exchange rate looks fairly similar to the function in Figure \ref{fig:USDEURExchangeRateVolatility}.

\subsection{LIBOR Rates, Treasury Bill Yields, and Treasury Bond Yields.} We obtain constant-maturity 1-month U.S. Treasury bill yields, constant-maturity 3-month U.S. Treasury bill yields, and constant-maturity 30-year Treasury bond yields from the Federal Reserve Economic Data (FRED) database as well. (The time series designations in the FRED database for the constant-maturity 1-month U.S. Treasury bill yields, the constant-maturity 3-month U.S. Treasury bill yields, and 30-year U.S. Treasury bond yields are \texttt{DGS1MO}, \texttt{DGS3MO}, and \texttt{GS30}, respectively.) Again, the yields are recorded each day at 12:00 noon Eastern Standard Time. The FRED database contains yields recorded at the end of every week and the end of every month, but we chose to work with yields recorded at the end of each day on which financial markets were open. Additionally, we computed the $R_{t_{i-1}}$ in equation (\ref{eq:PointwiseVariances}) correctly, fully accounting for long weekends and market holidays. We slightly randomly perturbed the raw data to break ties, as discussed in the previous section. Unlike the Treasury bill and bond yield data, we obtain overnight London Interbank Offered Rates (LIBOR) from Thompson Reuters Datastream. (For more information about Datastream, see \url{http://thomsonreuters.com/products}.) The LIBOR rate is the rate that large banks use to borrow and lend from one another on the overnight market.

\begin{figure}
 \centering
 \subfloat[]{\label{fig:LIBORVolatility}
   \includegraphics[width=0.45\textwidth]{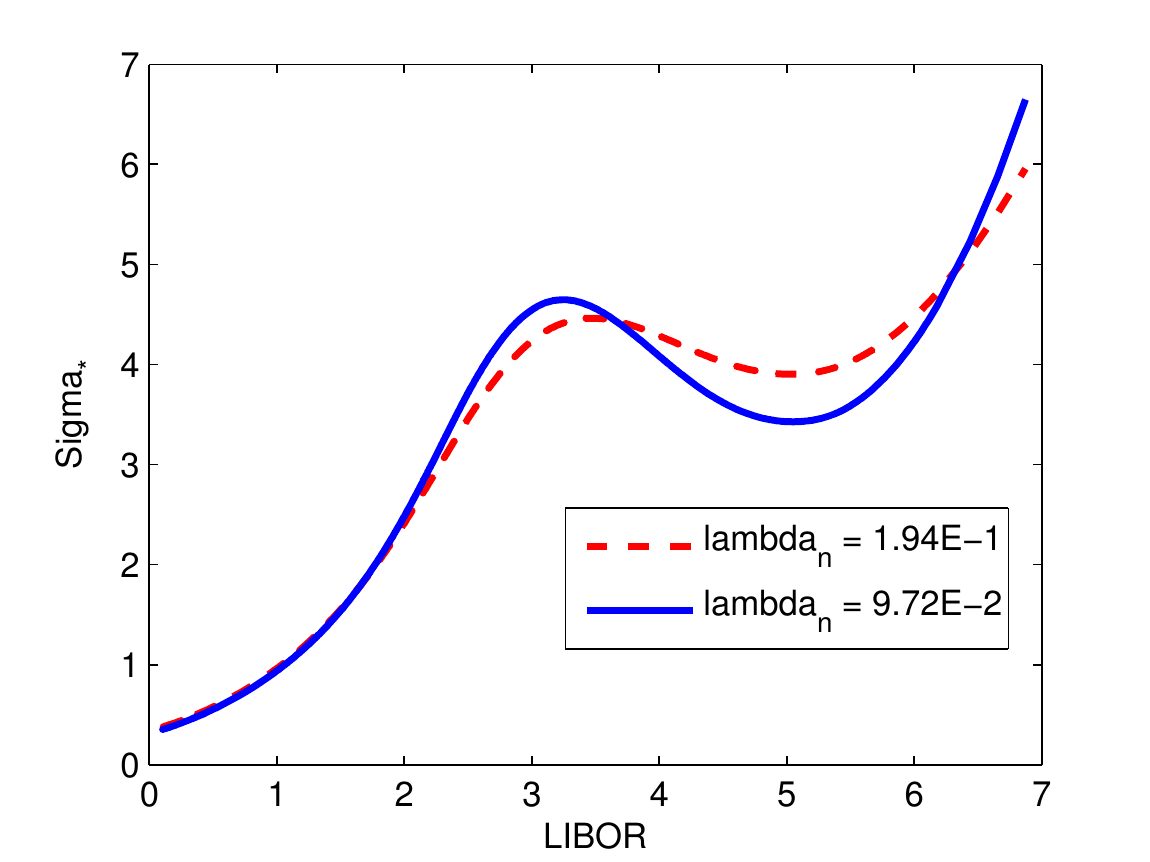}}
   \quad%
 \subfloat[]{\label{fig:OneMonthTreasuryBillYieldVolatility}
   \includegraphics[width=0.45\textwidth]{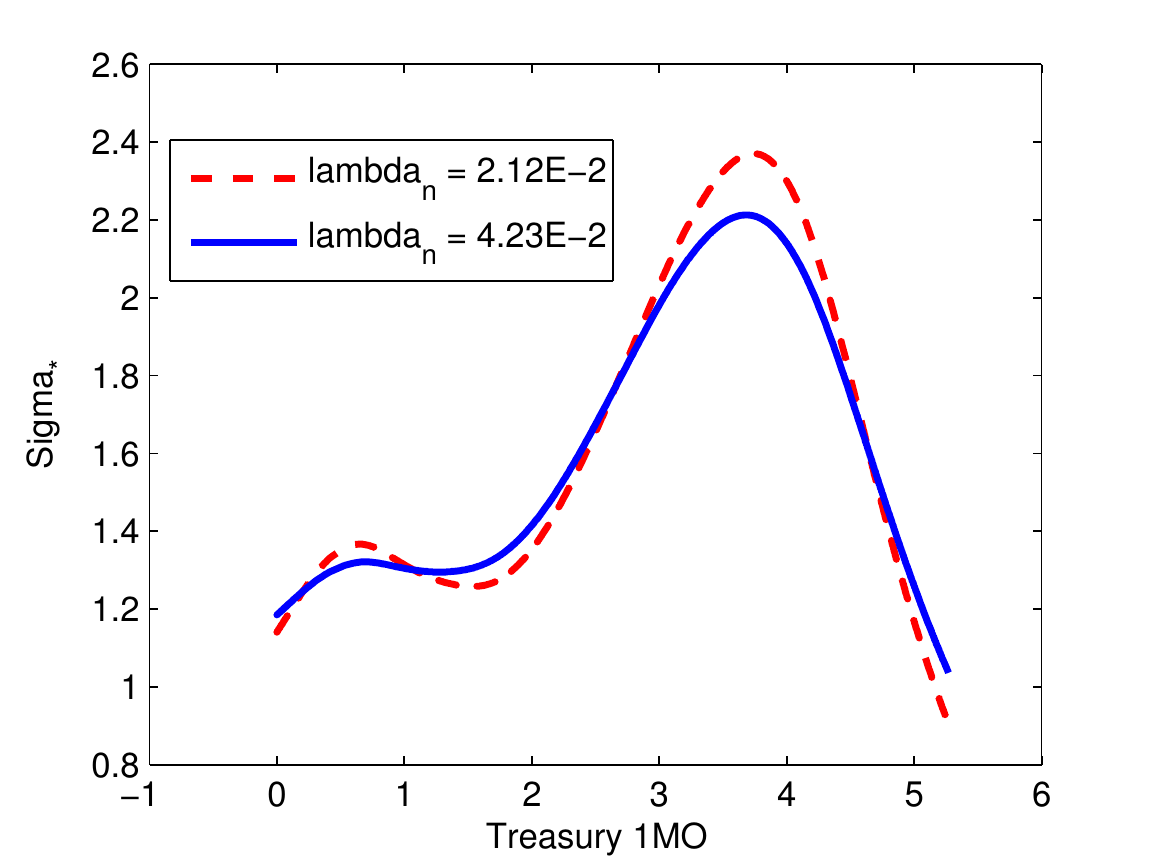}}
   \quad%
 \subfloat[]{\label{fig:ThreeMonthTreasuryBillYieldVolatility}
   \includegraphics[width=0.45\textwidth]{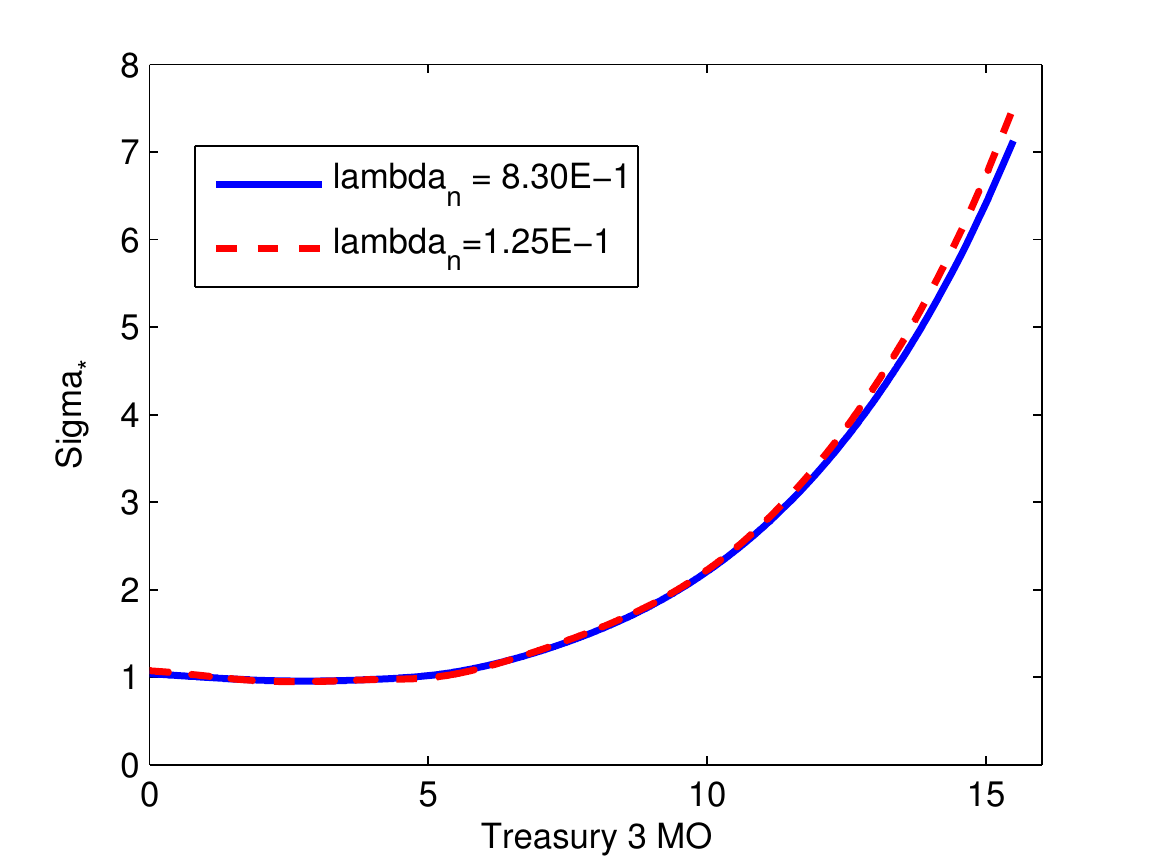}}
   \quad%
 \subfloat[]{\label{fig:ThirtyYearTreasuryBondYieldVolatility}
   \includegraphics[width=0.45\textwidth]{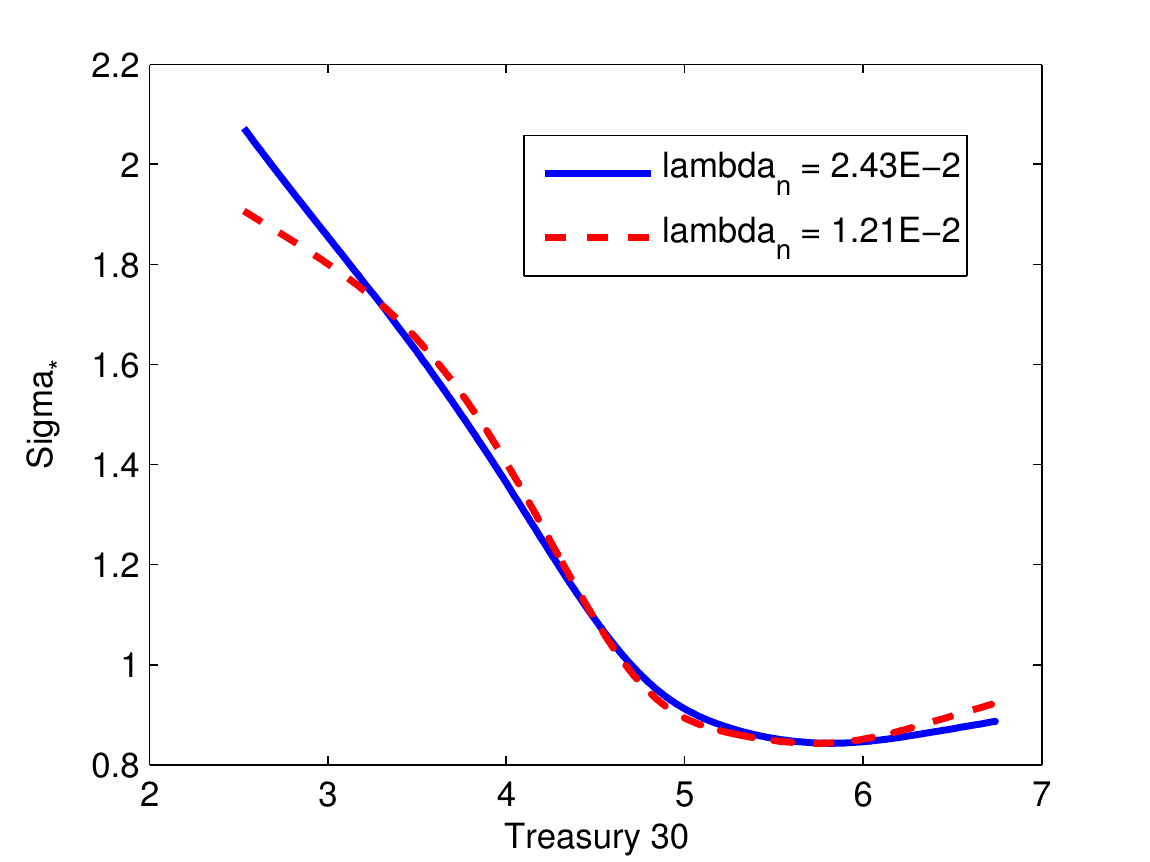}}
   \quad%
 \setcaptionwidth{0.85\textwidth}
 \caption{Under the assumption that overnight LIBOR, 1-month U.S. Treasury bill yields, 3-month U.S. Treasury bill yields, and 30-year U.S. Treasury bond yields are governed by diffusion processes, these figures show estimates $\sigma_{*}$ of the diffusion function $\sigma$. The periods of time for which the estimates $\sigma_{*}$ were derived are, respectively, January 2, 1999 through May 26, 2010 for LIBOR; July 31, 2001 through May 31, 2010 for 1-month U.S. Treasury bill yields; January 4, 1982 through Mary 21, 2010 for 3-month U.S. Treasury bill yields; and January 4, 1999 through May 21, 2010 for 30-year U.S. Treasury bond yields.}
 \label{fig:VariousInterestRateVolatilities}
\end{figure}


The periods of time for which the estimates $\sigma_{*}$ were derived were: January 2, 1999 through May 26, 2010 for LIBOR, July 31, 2001 through May 31, 2010 for 1-month U.S. Treasury bill yields, January 4, 1982 through Mary 21, 2010 for 3-month U.S. Treasury bill yields, and January 4, 1999 through May 21, 2010 for 30-year U.S. Treasury bond yields.

In Figure \ref{fig:LIBORVolatility}, we plot the estimates of London Interbank Offered Rate (LIBOR) volatility for two different values of $\lambda$. Our procedure generates an estimate of the volatility function, $\sigma_{*}$, that is clearly neither linear nor a constant multiple of the square root function (as in the case of the Cox-Ingersoll-Ross process). Indeed, the volatility function is not even monotonic. Though similarly non-monotonic, the estimate $\sigma_{*}$ for one-month U.S. Treasury bill yields is low when yields are both relatively low and relatively high. The volatility of constant-maturity one-month U.S. Treasury bill yields seems to be highest when yields are around $3.75\%$. The estimates of $\sigma$ for 3-month U.S. Treasury bill yields and 30-year Treasury bond yields are remarkably different. The volatility function associated with 3-month U.S. Treasury bill yields appears to be a monotonic increasing function of the underlying yield. In contrast, the volatility function associated with 30-year U.S. Treasury bond yields is monotonically decreasing function of the underlying yield yield. No function appears to adhere closely to any of the standard choices of $\sigma$ in the financial literature.

\bibliographystyle{siam}
\bibliography{mpqle_diff}

\begin{thebibliography}{10}

\bibitem{asher:petzold:1988}
{\sc U.~Asher and L.~Petzold}, {\em Computer Methods for Ordinary Differential
  Equations and Differential Algebraic Equations}, {SIAM}, Philadelphia, 1988.

\bibitem{bandi:phillips:2003}
{\sc F.~Bandi and P.~Phillips}, {\em Fully nonparametric estimation of scalar
  diffusion models}, Econometrica, 71:1 (2003), pp.~241--283.

\bibitem{banon:1978}
{\sc G.~Banon}, {\em Nonparametric identificaton for diffusion processes},
  {SIAM} J. Control and Optimization, 16:3 (1978), pp.~380--395.

\bibitem{comte:genoncatalot:rozenholc:2007}
{\sc F.~Comte, V.~Genon-Catalot, and Y.~Rozenholc}, {\em Penalized
  nonparametric mean square estimation of the coefficients of diffusion
  processes}, Bernoulli, 13:2 (2007), pp.~514--543.

\bibitem{cox:1983}
{\sc D.~Cox}, {\em Asymptotics for $m$-type smoothing splines}, Ann. Stat.,
  11:2 (1983), pp.~530--551.

\bibitem{cox:osullivan:1990}
{\sc D.~Cox and F.~O'Sullivan}, {\em Asymptotic analysis of penalized
  likelihood and related estimators}, Ann. Statist., 18:4 (1990),
  pp.~1676--1695.

\bibitem{eggermont:lariccia:2001}
{\sc P.~Eggermont and V.~LaRiccia}, {\em Maximum Penalized Likelihood
  Estimation}, vol.~I of Springer Series in Statistics, Springer-Verlag, New
  York, 2001.

\bibitem{eggermont:lariccia:2009}
\leavevmode\vrule height 2pt depth -1.6pt width 23pt, {\em Maximum Penalized
  Likelihood Estimation}, vol.~II of Springer Series in Statistics,
  Springer-Verlag, New York, 2009.

\bibitem{eubank:1988}
{\sc R.~Eubank}, {\em Spline Smoothing and Nonparametric Regression}, vol.~90
  of Statistics: {T}extbooks and {M}onographs, Marcel Dekker, Inc., New York,
  1988.

\bibitem{fan:fan:lv:2007}
{\sc J.~Fan, Y.~Fan, and J.~Lv}, {\em Aggregation of nonparametric estimators
  for volatility matrix}, J. Financial Econometrics, 5 (2007), pp.~321--357.

\bibitem{fan:yao:2005}
{\sc J.~Fan and Q.~Yao}, {\em Nonlinear Time Series: Nonparametric and
  Parametric Methods}, Springer Series in Statistics, Springer, New York, 2005.

\bibitem{florens-zmirou:1993}
{\sc D.~Florens-Zmirou}, {\em On estimating the diffusion coefficient from
  discrete observations}, J. Appl. Probab., 30:4 (1993), pp.~790--804.

\bibitem{genoncatalot:laredo:picard:1992}
{\sc V.~Genon-Catalot, C.~Lar{\'e}do, and D.~Picard}, {\em Nonparametric
  estimation of the diffusion coefficient by wavelet methods}, Scand. J.
  Statist., 19 (1992), pp.~319--335.

\bibitem{gobet:hoffmann:reib:2004}
{\sc E.~Gobet, M.~Hoffmann, and M.~Reiss}, {\em Nonparametric estimation of
  scalar diffusions based on low frequency data}, Ann. Statist., 32:5 (2004),
  pp.~2223--2253.

\bibitem{green:silverman:1994}
{\sc P.~Green and B.~Silverman}, {\em Nonparametric Regression and Generalized
  Linear Models}, vol.~58 of Monographs on Statistics and Applied Probability,
  Chapman \& Hall, London, 1994.

\bibitem{hamrick:taqqu:2009}
{\sc J.~Hamrick and M.~Taqqu}, {\em Testing diffusion processes for
  non-stationarity}, Mathematical Methods of Operations Research, 69:3 (2009),
  pp.~509--551.

\bibitem{hoffmann:1999b}
{\sc M.~Hoffmann}, {\em Adaptive estimation in diffusion processes}, Stoch.
  Process. Appl., 79:1 (1999), pp.~135--163.

\bibitem{hoffmann:1999}
\leavevmode\vrule height 2pt depth -1.6pt width 23pt, {\em Adaptive estimation
  in diffusion processes}, Stochastic Process. Appl., 79:1 (1999),
  pp.~135--163.

\bibitem{hoffmann:1999a}
\leavevmode\vrule height 2pt depth -1.6pt width 23pt, {\em $l_{p}$ estimation
  of the diffusion coefficient}, Bernoulli, 5:3 (1999), pp.~447--481.

\bibitem{jacod:2000}
{\sc J.~Jacod}, {\em Non-parametric kernel estimation of the coefficient of a
  diffusion}, Scand. J. Statist., 27:1 (2000), pp.~83--96.

\bibitem{jiang:knight:1997}
{\sc G.~Jiang and J.~Knight}, {\em A nonparametric approach to the estimation
  of diffusion processes, with an application to a short-term interest rate
  model}, Econometric Theory, 13:5 (1997), pp.~615--645.

\bibitem{jorion:2009}
{\sc P.~Jorion}, {\em Financial Risk Manager Handbook}, John Wiley \& Sons,
  Hoboken, New Jersey, 2009.

\bibitem{karatzas:shreve:1991}
{\sc I.~Karatzas and S.~Shreve}, {\em Brownian Motion and Stochastic Calculus},
  Springer-Verlag, New York, 1991.

\bibitem{kutoyants:2004}
{\sc Y.~Kutoyants}, {\em Statistical Inference for Ergodic Diffusion
  Processes}, Springer Series in Statistics, Springer-Verlag, London, 2004.

\bibitem{soulier:1998}
{\sc P.~Soulier}, {\em Nonparametric estimation of the diffusion coefficient of
  a diffusion process}, Stochastic Anal. Appl., 16 (1998), pp.~185--200.

\bibitem{utreras:1981}
{\sc F.~Utreras}, {\em On computing robust splines and applications}, {SIAM} J.
  Sci. Statist. Comput., 2 (1981), pp.~153--163.

\end{thebibliography}

\end{document}